\providecommand{\tabularnewline}{\\}
\numberwithin{equation}{section}
\numberwithin{figure}{section}
\theoremstyle{plain}
\newtheorem{thm}{\protect\theoremname}
  \theoremstyle{remark}
  \newtheorem*{rem*}{\protect\remarkname}
\date{}
  \providecommand{\remarkname}{Remark}
\providecommand{\theoremname}{Theorem}
\begin{document}

\title{Study of a model equation in detonation theory}

\author{Luiz M. Faria$^{*}$, Aslan R. Kasimov%
\thanks{Applied Mathematics and Computational Sciences, KAUST, Saudi Arabia
\protect \\
luiz.faria@kaust.edu.sa, aslan.kasimov@kaust.edu.sa%
} , and Rodolfo R. Rosales%
\thanks{Department of Mathematics, MIT, USA, rrr@math.mit.edu%
}}
\maketitle
\begin{abstract}
Here we analyze properties of an equation that we previously proposed
to model the dynamics of unstable detonation waves \cite{kasimov2013model}.
The equation is 
\[
u_{t}+\frac{1}{2}\left(u^{2}-uu\left(0_{-},t\right)\right)_{x}=f\left(x,u\left(0_{-},t\right)\right),\quad x\le0,\quad t>0.
\]
It describes a detonation shock at $x=0$ with the reaction zone in
$x<0$. We investigate the nature of the steady-state solutions of
this nonlocal hyperbolic balance law, the linear stability of these
solutions, and the nonlinear dynamics. We establish the existence
of instability followed by a cascade of period-doubling bifurcations
leading to chaos. 
\end{abstract}

\section{Introduction}

A detonation is a shock wave that propagates in a reactive medium
where exothermic chemical reactions are ignited as a result of the
heating by the shock compression. The energy released in these reactions,
in turn, feeds back to the shock in the form of compression waves
and thus sustains the shock motion. The dynamics of such shock--reaction
coupling is highly nonlinear due to the sensitivity of the chemical
reactions to temperature, making the problem significantly more challenging
than shock dynamics in non-reactive media. A steady planar detonation
wave is rarely observed in experiments. Complex time-dependent and
multi-dimensional structures tend to develop \cite{fickett2011detonation,Lee-2008}.
Numerical simulations of the equations of reactive gas dynamics are
able to reproduce at a qualitative level the complex structures observed
in experiments (see, e.g., \cite{taki1978numerical,BM92,oran1998numerical}).
However, obtaining physical insights into the basic mechanisms of
the instability requires simplified modeling and remains challenging.

In one dimension, the instabilities of the reactive shock wave manifest
themselves in the form of a ``galloping detonation'' \cite{fickett1966flow,fickett2011detonation},
wherein the shock speed oscillates around its steady value. It has
been shown through extensive numerical experiments that as the activation
energy, $E$, a parameter in the equations measuring the temperature
sensitivity of the chemical reactions, is varied, the shock speed
transitions from a constant to an oscillatory function. Further increase
of $E$ leads to a period-doubling bifurcation cascade, which ultimately
results in the shock moving at a chaotic speed \cite{Ng2005,HenrickAslamPowers2006}.
The mechanism for such instabilities is still not completely understood. 

In this paper, we show that the model introduced in \cite{kasimov2013model},
which consists of a single non-local partial differential equation
(PDE), is capable of reproducing the complexity observed in one-dimensional
simulations of reactive Euler equations. The model possesses traveling
wave solutions precisely analogous to the ZND theory (named after
Zel'dovich \cite{Zeldovich1940}, von Neumann \cite{vonNeumann1942},
and D\"{o}ring \cite{Doering1943}, who independently developed the theory
in the 1940s), with both the Chapman-Jouguet case and the overdriven
solutions present. Furthermore, stability analysis and unsteady simulations
of the model demonstrate the complexity seen in galloping detonations,
in particular their chaotic dynamics. These findings suggest that
a theory much simpler than the full reactive Euler equations may be
capable of describing the rich shock dynamics observed in detonation
waves. 

Simplified models have been used in the past to study detonations.
Both rational asymptotic theories and \emph{ad hoc} models have been
introduced previously to gain insight into the dynamics of detonation.
The reader can find extensive references in the recent review articles
and books \cite{zhang2012shock,Lee-2008,clavin2012analytical}. The
most relevant to our study is the theory of weakly nonlinear detonations
\cite{RosalesMajda:1983ly}, which is a model derived asymptotically
from the reactive Euler equations. Before \cite{RosalesMajda:1983ly},
Fickett \cite{Fickett:1979ys} and Majda \cite{Majda:1980zr} independently
introduced \emph{ad hoc} analog models, which were based on the idea
of extending Burgers' equation by an additional equation modeling
chemical reactions. The effect of chemical reactions in these analogs
appears as a modification of the flux function to include the chemical
energy term. The analog models received much attention in the past
\cite{Fickett:1979ys,fickett1985introduction,fickett1985stability,fickett1987decay,fickett1989approach,Radulescu:2011fk}
and continue to attract interest from a mathematical point of view
\cite{humpherys2013stability}. These simplified models possess a
theory analogous to steady ZND theory, with its Chapman-Jouguet (CJ),
strong, and weak detonation solutions. The weakly non-linear model
\cite{RosalesMajda:1983ly} is a result of an asymptotic reduction
of the reactive Euler equations. It applies in any number of spatial
dimensions, reducing in one dimension to equations very similar to
those of the analogs and therefore also containing the theory of steady
ZND waves. The analog models have been thought to perform poorly in
describing galloping one-dimensional instabilities and the transition
to chaos. However, the recent work of Radulescu and Tang \cite{Radulescu:2011fk}
demonstrates that a slightly modified version of Fickett's analog,
to include a two-stage chemical reaction with an inert induction zone
and a following reaction zone, reproduces much of the complexity of
detonations in reactive Euler equations. We suggest that even a much
simpler scalar equation can capture many of the known phenomena of
pulsating detonation waves.

The remainder of this paper is structured as follows. In Section 2,
we introduce the model and discuss its connection with the weakly
nonlinear model. Next, we develop a general theory for the proposed
equation and compute the possible steady ZND solutions. In Section
3, we derive a dispersion relation for the linear stability, and prove
certain important properties about the distribution of the eigenvalues.
Finally, in Section 4, we focus on a specific example, for which we
perform an extensive numerical study. With the example, we calculate
the linear stability spectrum, the onset of instabilities, and the
long-time nonlinear dynamics of solutions. Using tools from dynamical
system theory, we show that the solution goes through a sequence of
period doubling bifurcations to chaos, much like in the reactive Euler
equations.

\section{The Model}

Our model construction is based on two basic ideas: weakly nonlinear
approximation \cite{RosalesMajda:1983ly} and non-locality of the
chemical energy release rate \cite{fickett1985introduction}. The
precise nature of this non-locality is explained below. The weakly
nonlinear theory of detonation in one dimension, in the inviscid limit,
results in the following simplified system \cite{RosalesMajda:1983ly}:
\begin{alignat}{1}
 & u_{t}+\left(\frac{u^{2}}{2}+\frac{q}{2}\lambda\right)_{\eta}=0,\label{eq:mod_ros1}\\
 & \lambda_{\eta}=\omega\left(\lambda,u\right),\label{eq:mod_ros3}
\end{alignat}
where $t$ and $\eta$ are time and spatial variables, respectively;
$\lambda$ is the mass fraction of reaction products, going from $0$
ahead of the shock to $\text{1}$ in the fully burnt mixture; $u$
can be thought of as, for example, a temperature; $\omega(\lambda,u)$
is the reaction rate and $q$ is a constant representing the chemical
heat release. Note that (\ref{eq:mod_ros3}) propagates waves instantaneously
since the time derivative is missing in the equation. Nevertheless,
(\ref{eq:mod_ros1}--\ref{eq:mod_ros3}) constitutes a hyperbolic
system. 

Consider a shock moving into an unreacted ($\lambda=0$), unperturbed
($u=0$) region. At the shock, we apply the Rankine-Hugoniot conditions
to (\ref{eq:mod_ros1}) to obtain 

\begin{equation}
-D\left[u\right]+\frac{1}{2}\left[u^{2}\right]+\frac{q}{2}\left[\lambda\right]=0,\label{eq:R-H}
\end{equation}
where $D$ is the shock speed and the brackets denote the jump across
the shock in the enclosed variables. Using $\left[\lambda\right]=0$
and that $u=0$ ahead of the shock, it follows from (\ref{eq:R-H})
that $D=\dot{\eta}_{s}=u_{s}/2$, where $\eta_{s}(t)$ is the shock
position and $u_{s}=u\left(\eta_{s}^{-},t\right)$ denotes the post-shock
value of $u$. A change of variables to the shock-attached frame,
given by $x=\eta-\eta_{s}(t)$, yields 
\begin{alignat}{1}
 & u_{t}+\left(\frac{u^{2}}{2}+\frac{q}{2}\lambda-Du\right)_{x}=0,\label{eq:mod_ros1-1}\\
 & \lambda_{x}=\omega\left(\lambda,u\right),\label{eq:mod_ros3-1}
\end{alignat}
for $x\leq0$ and $u=0,\ \lambda=0$ for $x>0$. 

Now we make an important assumption that $\omega(\lambda,u)=\omega(\lambda,u_{s})$.
This simplifying assumption is the reason why we call the model nonlocal,
because the change of $\lambda$ at any given point $x$ at time $t$
is determined not by $u\left(x,t\right)$ at that point, but by $u$
at the shock, $x=0$. This means that any change of $u_{s}\left(t\right)$
propagates instantaneously over the whole domain, $x<0$. Note that
such assumption is sometimes used in modeling detonation in condensed
explosives. The idea behind it is that the energy release is primarily
controlled by how hard the explosive is hit by the shock \cite{wackerle1978shock,fickett1985introduction}. 

As a consequence of the assumed form of $\omega$, equation (\ref{eq:mod_ros3-1})
can now be integrated over $x$ to yield $\lambda=F(x,u_{s})$. Upon
differentiation of the latter with respect to $x$ and substitution
into (\ref{eq:mod_ros1-1}) (letting $qF_{x}/2=f$), we obtain one
non-local equation on the half-line, $x\leq0$, given by
\begin{equation}
u_{t}+\frac{1}{2}\left(u^{2}-uu_{s}\right)_{x}=f\left(x,u_{s}\right).\label{eq:KFR-equation}
\end{equation}
Conversely, it can be shown that for any positive function, $f$,
a function $\omega(\lambda,u_{s})$ can be found such that (\ref{eq:KFR-equation})
is equivalent to the system given by (\ref{eq:mod_ros1-1})-(\ref{eq:mod_ros3-1}). 

The shock, which is now located at $x=0$ at any $t$, must satisfy
the Lax conditions, that is, $c(0^{-},t)>0>c(0^{+},t)$, where $c=u-u_{s}/2$
denotes the characteristic speed in (\ref{eq:KFR-equation}). It follows
that $D\left(t\right)=u_{s}/2=c\left(0^{-},t\right)>0$.

Initial data for (\ref{eq:KFR-equation}) are given as $u\left(x,0\right)=g\left(x\right)$
for $x<0$, where $g\left(x\right)$ is a suitable function and $u\left(x,0\right)=0$
for $x>0$ is assumed implicitly. An important feature of (\ref{eq:KFR-equation})
is that the boundary value of the unknown, $u_{s}$, is contained
within the equation. This is one of the key reasons for the observed
complexity of the shock dynamics. While the boundary information from
the shock at $x=0$ is propagated instantaneously throughout the solution
domain at $x<0$, there is a finite-speed influence propagating from
the reaction zone toward the shock along the characteristics of (\ref{eq:KFR-equation}).

In characteristic form, (\ref{eq:KFR-equation}) can be written as
\begin{alignat}{1}
 & \frac{du}{d\tau}=f\left(x,u_{s}\right),\\
 & \frac{dx}{d\tau}=u-\frac{u_{s}}{2},
\end{alignat}
where the characteristic speed is $c=u-u_{s}/2$. Therefore, (\ref{eq:KFR-equation})
incorporates, within a single scalar equation, the nonlinear interaction
of two waves. One is the usual Burgers wave propagating toward the
shock at a finite speed, $c$. The other is of an unusual type, as
it represents an instantaneous effect by the state $u_{s}$ at the
shock, $x=0$, on the whole solution region $x<0$. Physically, this
second wave corresponds to the particle paths carrying the reaction
variable as explained in \cite{kasimov2013model}. In the weakly nonlinear
limit, these paths have, effectively, an infinite velocity.

\section{Steady solutions and their stability}

In this section, we explore some general properties of the proposed
model. Keeping in mind the connection with detonation theory, we restrict
our attention to $f(x,u_{s})$ such that $\int_{-\infty}^{0}f(x,u_{s})\ dx=q/2=const$.
This condition means that the amount of energy released by reactions
is finite and fixed. We consider only exothermic reactions; hence,
$f(x,u_{s})\geq0$. Although these assumptions facilitate some of
the computations, they are not required for most of the results presented
here, and more general forms of the forcing can be considered without
adding much more complexity to the analysis.

\subsection{\label{sub:Steady-state-solutions}Steady state solutions}

Let $u_{0}\left(x\right)$ denote a steady-state smooth solution of
(\ref{eq:KFR-equation}). It is a solution of 
\begin{equation}
\frac{d}{dx}\left(\frac{u_{0}^{2}}{2}-\frac{u_{0}u_{0s}}{2}\right)=f\left(x,u_{0s}\right),\label{eq:u_0_ode}
\end{equation}
or, equivalently, 
\[
\left(u_{0}-\frac{u_{0s}}{2}\right)u_{0}'=f\left(x,u_{0s}\right),
\]
where `` $'$ '' denotes the derivative with respect to $x$ and
$u_{0s}=u_{0}\left(0\right)$ is the steady-state value of $u$ at
$x=0$, which is to be found together with $u_{0}\left(x\right)$.
Integration of (\ref{eq:u_0_ode}) from $0$ to $x$ yields a quadratic
equation for $u_{0}$, 
\[
u_{0}^{2}-u_{0}u_{0s}=2\int_{0}^{x}f\left(y,u_{0s}\right)dy,
\]
where the integration constant vanishes in view of the boundary condition
at $x=0$. The solution profile is thus given by 
\begin{equation}
u_{0}\left(x\right)=\frac{u_{0s}}{2}+\sqrt{\frac{u_{0s}^{2}}{4}+2\int_{0}^{x}f\left(y,u_{0s}\right)dy}.\label{eq:u_0(x)}
\end{equation}
The plus sign is chosen here to satisfy the boundary condition at
$x=0$. We note that for $u_{0}\left(x\right)$ in (\ref{eq:u_0(x)})
to be real, $f$ must be constrained so that at any $x$, the expression
under the square root is non-negative. Effectively, this is the requirement
of overall exothermicity of the source term.

The choice of $u_{0s}$ depends on the behavior of the solution at
$x\to-\infty$. For the square root in equation (\ref{eq:u_0(x)})
to be real at $x=-\infty$, we require that 
\begin{equation}
u_{0s}=\zeta\left(2\sqrt{2\int_{-\infty}^{0}f\left(y,u_{0s}\right)dy}\right)\label{eq:u_0s_overdriven}
\end{equation}
with some $\zeta\geq1$. The effect of $\zeta$, which is the analog
of the overdrive factor in detonation theory, on the shape and the
stability of the traveling wave can be readily appreciated in the
non-dimensional formulation given in Section \ref{sec:An-example}.
The case with $\zeta=1$ whereby 
\begin{equation}
u_{0s}=2\sqrt{2\int_{-\infty}^{0}f\left(y,u_{0s}\right)dy},\label{eq:u_0s_cj}
\end{equation}
is an important special case commonly referred to as the Chapman-Jouguet
solution, because the characteristic speed at $x=-\infty$ is $c_{0}\left(-\infty\right)=u_{0}\left(-\infty\right)-u_{0s}/2=0$.
Therefore, the characteristics point toward the shock everywhere at
$x<0$ becoming vertical at $x=-\infty$. Cases where $\zeta>1$ are
related to piston-driven detonations wherein the state at $x=-\infty$
remains subsonic, i.e., $c>0$. In the context of the Euler detonations,
they are known to be more stable than Chapman-Jouguet waves \cite{LeeStewart90,short1998cellular}.

\subsection{\label{sub:Spectral-stability}Spectral stability of the steady-state
solution}

Consider the linear stability of the steady-state solution obtained
in the previous section. For simplicity, we limit the analysis to
the CJ case, but the overdriven solution can be similarly analyzed.
Let $u\left(x,t\right)=u_{0}\left(x\right)+\epsilon u_{1}\left(x,t\right)+O\left(\epsilon^{2}\right)$
with $\epsilon\to0$ and linearize (\ref{eq:KFR-equation}). We find
that 
\begin{equation}
u_{1t}+\left(u_{0}-\frac{u_{0s}}{2}\right)u_{1x}+u_{0}'u_{1}=\left(\frac{\partial f}{\partial u_{s}}\left(x,u_{0s}\right)+\frac{u_{0}'}{2}\right)u_{1}\left(0,t\right).
\end{equation}
The steady-state characteristic speed is 
\begin{equation}
c_{0}=u_{0}-\frac{u_{0s}}{2}=\sqrt{2\int_{-\infty}^{x}f\left(y,u_{0s}\right)dy},
\end{equation}
and the coefficient on the right-hand side of the linearized equation
above is 
\begin{equation}
b_{0}\equiv\frac{\partial f}{\partial u_{s}}\left(x,u_{0s}\right)+\frac{u_{0}'}{2}=\frac{\partial f}{\partial u_{s}}\left(x,u_{0s}\right)+\frac{f\left(x,u_{0s}\right)}{2c_{0}\left(x\right)}=\frac{\partial f}{\partial u_{s}}\left(x,u_{0s}\right)+\frac{1}{2}c_{0}(x)'.
\end{equation}
Both $c_{0}$ and $b_{0}$ are functions of $x$. 

Thus, the linear stability problem requires that the following linear
non-local PDE with variable coefficients, 
\begin{equation}
u_{1t}+c_{0}u_{1x}+c_{0}'u_{1}=b_{0}u_{1}\left(0,t\right),\label{eq:u_1_pde}
\end{equation}
be solved subject to appropriate initial data, $u_{1}\left(x,0\right)$.
If spatially bounded (in some norm, to be defined below) solutions
of (\ref{eq:u_1_pde}) grow in time, then instability is obtained.
At this point, we can proceed with either the Laplace transform in
time (as in \cite{Erpenbeck62}) or normal modes (as in \cite{LeeStewart90}).
We choose the latter and substitute the normal modes, 
\begin{equation}
u_{1}=\exp\left(\sigma t\right)v\left(x\right),\label{eq:normal-mode}
\end{equation}
into (\ref{eq:u_1_pde}), to obtain 
\[
c_{0}v'+c_{0}'v+\sigma v=b_{0}\left(x\right)v\left(0\right).
\]
This equation can be integrated directly to yield 
\[
\exp\left(\sigma\int_{0}^{x}\frac{dy}{c_{0}\left(y\right)}\right)c_{0}\left(x\right)v\left(x\right)-c_{0}\left(0\right)v\left(0\right)=v\left(0\right)\int_{0}^{x}b_{0}\left(\xi\right)\exp\left(\sigma\int_{0}^{\xi}\frac{dy}{c_{0}\left(y\right)}\right)d\xi.
\]
Denoting $p=\int_{x}^{0}dy/c_{0}\left(y\right)>0$, we obtain the
final solution for the amplitude of the normal mode as 
\begin{equation}
{\displaystyle v\left(x\right)=v\left(0\right)p'\left(x\right)e^{\sigma p\left(x\right)}\left[\int_{x}^{0}b_{0}\left(\xi\right)e^{-\sigma p\left(\xi\right)}d\xi-c_{0}\left(0\right)\right].}\label{eq:v(x)}
\end{equation}
The existence of an unstable eigenvalue with $\mathbb{\Re}(\sigma)>0$
and bounded $v(x)$ is equivalent to normal-mode instability. On physical
grounds, we require that $f$ be integrable in $x$ at any given $t$
(i.e., the $L^{1}$ norm of $f$ is bounded). This requirement follows
from the implicit assumption that $f$ is in fact the $x-$derivative
of some reaction progress variable, $\lambda$, varying between $0$
and $1$. We impose the same constraint on $u$, hence $v\in L^{1}\left(\mathbb{R}^{-}\right)$.

Note that $p\left(x\right)\to\infty$ as $x\to-\infty$, therefore,
the factor in front of the brackets in (\ref{eq:v(x)}) tends to infinity
as $x\to-\infty$. To prevent this super-exponential growth, the term
in the brackets must vanish as $x\to-\infty$. In fact, this condition
is also sufficient for instability.
\begin{thm}
\label{thm:disp-relation}Provided that $\|b_{0}(x)\|_{L^{1}}<\infty$,
the existence of a $\sigma$ with $\Re(\sigma)>0$ such that 
\begin{equation}
\int_{-\infty}^{0}b_{0}\left(\xi\right)e^{-\sigma p\left(\xi\right)}d\xi-c_{0}\left(0\right)=0,\label{eq:dispersion-relation}
\end{equation}
is both necessary and sufficient for the existence of unstable normal
modes, (\ref{eq:normal-mode}). \end{thm}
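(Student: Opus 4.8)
The plan is to read everything off the explicit representation (\ref{eq:v(x)}) of the normal-mode amplitude. Write $B(x):=\int_{x}^{0}b_{0}(\xi)e^{-\sigma p(\xi)}\,d\xi-c_{0}(0)$, so that $v(x)=v(0)\,p'(x)\,e^{\sigma p(x)}B(x)$, and recall $p(0)=0$, $p'(x)=-1/c_{0}(x)<0$, $p(x)>0$ on $\mathbb{R}^{-}$, and $p(x)\to+\infty$ as $x\to-\infty$. The first, routine, observation is that since $\Re(\sigma)>0$ and $p\ge0$ on $\mathbb{R}^{-}$ we have $|b_{0}(\xi)e^{-\sigma p(\xi)}|\le|b_{0}(\xi)|$, which is integrable by hypothesis; hence the limit $B_{\infty}:=\lim_{x\to-\infty}B(x)=\int_{-\infty}^{0}b_{0}(\xi)e^{-\sigma p(\xi)}\,d\xi-c_{0}(0)$ exists, and the dispersion relation (\ref{eq:dispersion-relation}) is exactly the statement $B_{\infty}=0$. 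So the theorem reduces to: for a given $\sigma$ with $\Re(\sigma)>0$, a nontrivial $v\in L^{1}(\mathbb{R}^{-})$ solving the normal-mode equation exists if and only if $B_{\infty}=0$.

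For necessity I would argue as follows. The normal-mode ODE $c_{0}v'+c_{0}'v+\sigma v=b_{0}v(0)$ is a regular first-order linear ODE on $(-\infty,0]$ (using $c_{0}>0$ there), so its solution is uniquely determined by the value $v(0)$ and must coincide with (\ref{eq:v(x)}); in particular, a nontrivial mode necessarily has $v(0)\neq0$. Then, if $B_{\infty}\neq0$, continuity gives $|B(x)|\ge|B_{\infty}|/2$ for all $x\le-M$ with $M$ large, so $|v(x)|\ge\frac{1}{2}|v(0)|\,|B_{\infty}|\,|p'(x)|\,e^{\Re(\sigma)p(x)}$ there. Substituting $s=p(x)$ (a decreasing bijection of $(-\infty,-M)$ onto $(p(-M),\infty)$, with $ds=p'(x)\,dx$) turns $\int_{-\infty}^{-M}|p'(x)|e^{\Re(\sigma)p(x)}\,dx$ into $\int_{p(-M)}^{\infty}e^{\Re(\sigma)s}\,ds=\infty$, contradicting $v\in L^{1}$. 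Hence $B_{\infty}=0$.

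For sufficiency, assume $B_{\infty}=0$ for some $\sigma$ with $\Re(\sigma)>0$, normalize $v(0)=1$, and let $v$ be given by (\ref{eq:v(x)}); it solves the normal-mode equation by construction and is not identically zero. The key point is that $B_{\infty}=0$ lets me rewrite the bracket as $B(x)=-\int_{-\infty}^{x}b_{0}(\xi)e^{-\sigma p(\xi)}\,d\xi$, whence $|v(x)|\le|p'(x)|\int_{-\infty}^{x}|b_{0}(\xi)|\,e^{-\Re(\sigma)(p(\xi)-p(x))}\,d\xi$, the exponent being $\le0$ because $p$ is decreasing. Integrating in $x$, applying Tonelli to swap the order of integration over $\{\xi<x<0\}$, and evaluating the inner integral $\int_{\xi}^{0}|p'(x)|e^{-\Re(\sigma)(p(\xi)-p(x))}\,dx$ by the substitution $s=p(x)$ — which yields $(1-e^{-\Re(\sigma)p(\xi)})/\Re(\sigma)\le1/\Re(\sigma)$ — gives $\|v\|_{L^{1}(\mathbb{R}^{-})}\le\|b_{0}\|_{L^{1}}/\Re(\sigma)<\infty$. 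Thus $v$ is a genuine unstable normal mode.

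I expect the one real obstacle to be this last estimate: one must verify that the exact cancellation encoded in $B_{\infty}=0$ produces enough decay of $B(x)$ as $x\to-\infty$ to defeat the super-exponential blow-up of the prefactor $p'(x)e^{\sigma p(x)}$, and the clean route is precisely the Tonelli swap followed by the $s=p(x)$ change of variables. The only other care-points are bookkeeping of the sign of $p'$ and the orientation in that substitution, together with the standing assumption $c_{0}>0$ on $\mathbb{R}^{-}$ that makes $p$ well defined and the ODE regular.
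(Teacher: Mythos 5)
Your proposal is correct and follows essentially the same route as the paper: read off the necessity from the divergence of the prefactor $p'(x)e^{\sigma p(x)}$ when the bracket has a nonzero limit, and for sufficiency rewrite the bracket as $-\int_{-\infty}^{x}b_{0}(\xi)e^{-\sigma p(\xi)}\,d\xi$ and bound $\|v\|_{L^{1}}$ by $\|b_{0}\|_{L^{1}}/\Re(\sigma)$ via the Tonelli swap and the substitution $z=p(\xi)-p(x)$. Your necessity step is in fact slightly more careful than the paper's one-line claim that $v(x)\to\infty$, since you show directly that $v\notin L^{1}$, which is the norm in which boundedness is actually required.
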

\begin{proof}
If condition (\ref{eq:dispersion-relation}) is not satisfied, then
$v(x)\rightarrow\infty$ as $x\rightarrow-\infty$. Now, suppose that
(\ref{eq:dispersion-relation}) is satisfied. Then, $v(x)$ takes
the form 
\begin{equation}
v\left(x\right)=\frac{v\left(0\right)}{c_{0}\left(x\right)}\int_{-\infty}^{x}b_{0}\left(\xi\right)e^{-\sigma\left(p\left(\xi\right)-p\left(x\right)\right)}d\xi.\label{eq:v(x)-1}
\end{equation}
We now show that $\Vert v\left(x\right)\Vert_{L^{1}}<\infty$. From
(\ref{eq:v(x)-1}), it follows that

\begin{alignat*}{1}
 & \Vert v\Vert_{L^{1}}=\left|v\left(0\right)\right|\int_{-\infty}^{0}dx\frac{1}{\left|c_{0}\left(x\right)\right|}\left|\int_{-\infty}^{x}b_{0}\left(\xi\right)e^{-\sigma\left(p\left(\xi\right)-p\left(x\right)\right)}d\xi\right|\le\\
 & \left|v\left(0\right)\right|\int_{-\infty}^{0}d\xi\int_{\xi}^{0}dx\frac{1}{\left|c_{0}\left(x\right)\right|}\left|b_{0}\left(\xi\right)\right|e^{-\Re\left(\sigma\right)\left(p\left(\xi\right)-p\left(x\right)\right)}.
\end{alignat*}
We change the integration variable in the inner integral from $x$
to $z=p\left(\xi\right)-p\left(x\right)$, so that $dx=-dz/p'\left(x\right)=c_{0}\left(x\right)dz$.
Then, 

\begin{alignat}{1}
 & \Vert v\Vert_{L^{1}}\le\left|v\left(0\right)\right|\int_{-\infty}^{0}d\xi\int_{0}^{p\left(\xi\right)-p\left(0\right)}dz\left|b_{0}\left(\xi\right)\right|e^{-\Re\left(\sigma\right)z}\le\frac{\left|v\left(0\right)\right|}{\Re\left(\sigma\right)}\Vert b_{0}\Vert_{L^{1}},
\end{alignat}
which proves that the unstable perturbations are bounded in the $L^{1}$
norm provided that $b_{0}\in L^{1}\left(\mathbb{R}^{-}\right)$. Thus,
(\ref{eq:dispersion-relation}) is both necessary and sufficient for
the existence of unstable normal modes. \end{proof}
\begin{rem*}
The dispersion relation (\ref{eq:dispersion-relation}) closely resembles
that of \cite{clavin1996acoustic,clavin1997multidimensional}, where
the detonation dynamics is analyzed in the asymptotic limit of strong
overdrive. In this limit, the entire flow downstream of the lead shock
has a small Mach number relative to the shock, hence the post-shock
pressure remains nearly constant. For this reason, such approximation
is called quasi-isobaric. However, the underlying assumptions in the
present model and those in the quasi-isobaric theory are quite different.
\end{rem*}
Another important result is that, under appropriate assumptions on
$f$, the unstable modes have a bounded growth rate. This result shows
that the so-called ``pathological'' instability, inherent to square-wave
models of detonation in the Euler equations \cite{zaidel1961stability,erpenbeck1963squarewave,fickett1985stability,he1999two},
does not occur in our model for smooth steady-state solutions. However,
in section \ref{sub:The-square-wave-limit} we show that this pathological
instability occurs in the square-wave limit of our model, when $f$
is replaced by a delta function. 
\begin{thm}
\label{thm:no-large-realpart-eig}Provided that $\|b_{0}c_{0}\|_{L^{\infty}}=M<\infty$,
there exist no eigenvalues with $\sigma_{r}>M/c_{0}\left(0\right)$. \end{thm}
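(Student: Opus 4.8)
The plan is to exploit the dispersion relation from Theorem \ref{thm:disp-relation} directly. Suppose, for contradiction, that $\sigma$ is an unstable eigenvalue with $\sigma_r = \Re(\sigma) > M/c_0(0)$. Since $\sigma$ satisfies (\ref{eq:dispersion-relation}), we have
\[
c_0(0) = \int_{-\infty}^0 b_0(\xi) e^{-\sigma p(\xi)}\, d\xi.
\]
The strategy is to bound the right-hand side in absolute value and derive $c_0(0) < c_0(0)$. The key observations are that $p(\xi) \geq 0$ for all $\xi \leq 0$ with $p(0)=0$, that $p'(\xi) = 1/c_0(\xi) > 0$, and that $|e^{-\sigma p(\xi)}| = e^{-\sigma_r p(\xi)}$.

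The main step is the estimate
\[
c_0(0) = \left| \int_{-\infty}^0 b_0(\xi) e^{-\sigma p(\xi)}\, d\xi \right|
\leq \int_{-\infty}^0 |b_0(\xi)| e^{-\sigma_r p(\xi)}\, d\xi
= \int_{-\infty}^0 |b_0(\xi) c_0(\xi)|\, p'(\xi)\, e^{-\sigma_r p(\xi)}\, d\xi,
\]
where I used $1 = c_0(\xi) p'(\xi)$. Now bound $|b_0(\xi) c_0(\xi)| \leq M$ and change variables $z = p(\xi)$ (so $dz = p'(\xi)\,d\xi$, with $z$ ranging over $[0,\infty)$ as $\xi$ ranges over $(-\infty,0]$, using $p(\xi)\to\infty$ as $\xi\to-\infty$ noted in the text). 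This gives
\[
c_0(0) \leq M \int_0^\infty e^{-\sigma_r z}\, dz = \frac{M}{\sigma_r}.
\]
Hence $\sigma_r \leq M/c_0(0)$, contradicting the assumption $\sigma_r > M/c_0(0)$. Therefore no eigenvalue with $\sigma_r > M/c_0(0)$ can satisfy the dispersion relation, and by Theorem \ref{thm:disp-relation} such values are not eigenvalues.

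I do not anticipate a serious obstacle here; the argument is a clean one-line estimate once the change of variables is set up, entirely parallel to the $L^1$ bound already carried out in the proof of Theorem \ref{thm:disp-relation}. The only points requiring a little care are: (i) confirming that $p(\xi)$ is a genuine change of variable onto $[0,\infty)$, which needs $c_0(\xi) > 0$ on $(-\infty,0)$ and $p(\xi)\to\infty$ — both already established in the text for the CJ case (indeed $c_0 \to 0$ at $-\infty$ makes the integral $\int dy/c_0$ diverge); and (ii) making sure the hypothesis $\|b_0 c_0\|_{L^\infty} = M < \infty$ is exactly what is needed to pull $M$ out of the integral after the substitution, which it is. One might also remark that the bound is not expected to be sharp, but that is not needed for the statement.
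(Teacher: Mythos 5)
Your argument is correct and is essentially identical to the paper's proof: both bound the dispersion-relation integral by $\int_{-\infty}^0 |b_0|e^{-\sigma_r p}\,d\xi$, change variables to $z=p(\xi)$ so the Jacobian produces the factor $c_0$, pull out $M=\|b_0c_0\|_{L^\infty}$, and conclude $c_0(0)\le M/\sigma_r$, contradicting $\sigma_r>M/c_0(0)$. No differences worth noting.
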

\begin{proof}
Notice that 
\begin{eqnarray*}
\left|\int_{-\infty}^{0}b_{0}(x)e^{-\sigma p(x)}dx\right| & \leq & \int_{-\infty}^{0}\left|b_{0}(x)e^{-\sigma p(x)}\right|dx\\
 & = & \int_{-\infty}^{0}\left|b_{0}(x)e^{-\sigma_{r}p(x)}\right|dx.
\end{eqnarray*}
Let $z=p(x)$ and note that this function is invertible since $p$
is monotonic. Substitution into the previous integral yields
\begin{eqnarray*}
\int_{-\infty}^{0}|b_{0}(x)e^{-\sigma_{r}p(x)}|dx & = & \int_{0}^{\infty}|b_{0}(p^{-1}(z))c_{0}(p^{-1}(z))|e^{-\sigma_{r}z}dx\\
 & \leq & \max_{-\infty\leq x\leq0}|b_{0}c_{0}|\int_{0}^{\infty}e^{-\sigma_{r}z}dx\\
 & = & \frac{1}{\sigma_{r}}\max_{-\infty\leq x\leq0}|b_{0}c_{0}|,
\end{eqnarray*}
and thus for 
\[
\sigma_{r}>\frac{\max_{\infty\leq x\leq0}|b_{0}c_{0}|}{c_{0}(0)},
\]
we obtain 
\[
\left|\int_{-\infty}^{0}b_{0}(x)e^{-\sigma p(x)}dx\right|\leq\frac{1}{\sigma_{r}}\max_{\infty\leq x\leq0}|b_{0}c_{0}|<c_{0}(0).
\]
 This contradicts the dispersion relation stated in Theorem 1. \end{proof}
\begin{rem*}
If $f(x,u_{0s})$ is integrable and bounded and $\frac{\partial f}{\partial u_{s}}\left(x,u_{0s}\right)$
is bounded, then it can be shown that $b_{0}c_{0}\in L^{\infty}$.
These constraints are sufficient to eliminate the pathological instabilities
in which arbitrarily large growth rates are present. \end{rem*}
\begin{thm}
If $\|b_{0}c_{0}\|_{L^{\infty}}=M<\infty$, there exists a bounded
interval $I$ large enough that all eigenvalues with $\sigma_{r}>0$
have imaginary part $|\sigma_{i}|<I$.\end{thm}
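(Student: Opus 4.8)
The plan is to recast the dispersion relation (\ref{eq:dispersion-relation}) as a statement about a Laplace transform and then exploit a Riemann--Lebesgue decay in the imaginary part $\sigma_i$ that is \emph{uniform} in the growth rate $\sigma_r>0$. First I would make in (\ref{eq:dispersion-relation}) the substitution already used in the proof of Theorem~\ref{thm:no-large-realpart-eig}, $z=p(x)$, which is a strictly decreasing bijection of $(-\infty,0]$ onto $[0,\infty)$ with $dx=-c_0(x)\,dz$. Setting $B(z):=(b_0c_0)\big(p^{-1}(z)\big)$, the dispersion relation becomes
\[
\int_{0}^{\infty}B(z)\,e^{-\sigma z}\,dz=c_0(0).
\]
By hypothesis $\|B\|_{L^\infty}=\|b_0c_0\|_{L^\infty}=M<\infty$, and changing variables back shows $\|B\|_{L^1(0,\infty)}=\|b_0\|_{L^1(\mathbb{R}^-)}$; I would assume the latter is finite, which is precisely the hypothesis of Theorem~\ref{thm:disp-relation} and is harmless here, since $\|\tfrac12 c_0'\|_{L^1}=\tfrac12 c_0(0)$ automatically ($c_0$ being monotone) and $\partial f/\partial u_s$ is integrable in $x$ for the source terms under consideration. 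Thus $B\in L^1\cap L^\infty$, and for each $\sigma$ with $\sigma_r>0$ the left-hand side is the Fourier transform at $\sigma_i$ of the $L^1$ function $g_{\sigma_r}(z):=B(z)e^{-\sigma_r z}\mathbf{1}_{(0,\infty)}(z)$.

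The core step is to prove $\widehat{g_{\sigma_r}}(\sigma_i)\to 0$ as $|\sigma_i|\to\infty$, uniformly over all $\sigma_r>0$. For fixed $\sigma_r$ this is just Riemann--Lebesgue, so the content is the uniformity, in particular at the corner $\sigma_r\downarrow 0$, where the naive tail bound $Me^{-\sigma_r R}/\sigma_r$ blows up. I would obtain uniformity from compactness: the map $[0,\infty]\to L^1(\mathbb{R})$, $\sigma_r\mapsto g_{\sigma_r}$, is continuous by dominated convergence (all $g_{\sigma_r}$ are dominated by $|B|\in L^1$), with $g_0=B\mathbf{1}_{(0,\infty)}$ and $g_\infty=0$, so its image $K$ is compact in $L^1$. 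Riemann--Lebesgue is uniform on compacta: given $\varepsilon>0$, cover $K$ by finitely many $L^1$-balls of radius $\varepsilon/2$ about $g_{\sigma_r^{(1)}},\dots,g_{\sigma_r^{(m)}}$, choose $\Omega$ with $|\widehat{g_{\sigma_r^{(j)}}}(\omega)|<\varepsilon/2$ for all $j$ whenever $|\omega|>\Omega$, and conclude $\sup_{\sigma_r\ge 0}|\widehat{g_{\sigma_r}}(\omega)|<\varepsilon$ for $|\omega|>\Omega$. Equivalently, one may avoid compactness and argue by hand: split $\int_0^\infty=\int_0^R+\int_R^\infty$, bound the tail by $\|B\mathbf{1}_{(R,\infty)}\|_{L^1}$ uniformly in $\sigma_r,\sigma_i$, approximate $B\mathbf{1}_{(0,R)}$ in $L^1$ by a step function, and integrate the step function explicitly to gain a factor $\le 2/|\sigma_i|$; since $0<e^{-\sigma_r z}\le 1$ on $(0,\infty)$, all these estimates are uniform in $\sigma_r>0$.

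To conclude, suppose $\sigma=\sigma_r+i\sigma_i$ is an eigenvalue with $\sigma_r>0$; then the displayed dispersion relation forces $|\widehat{g_{\sigma_r}}(\sigma_i)|=c_0(0)>0$. Applying the uniform decay with $\varepsilon=c_0(0)$ produces $\Omega>0$ such that $\sup_{\sigma_r>0}|\widehat{g_{\sigma_r}}(\omega)|<c_0(0)$ for $|\omega|>\Omega$, so no unstable eigenvalue can have $|\sigma_i|>\Omega$; one then takes $I=(-\Omega,\Omega)$. Combined with Theorem~\ref{thm:no-large-realpart-eig}, which caps $\sigma_r$ at $M/c_0(0)$, this confines the entire unstable spectrum to a bounded rectangle. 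I expect the uniformity in $\sigma_r$ near $0$ to be the only genuine obstacle: a pointwise-in-$\sigma_r$ Riemann--Lebesgue statement is not enough, and securing the uniform version is what really uses the $L^1$ control of $B$ (that is, $b_0\in L^1$) beyond the $L^\infty$ bound stated in the theorem.
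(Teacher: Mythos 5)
Your proof is correct and follows the same basic route as the paper's: change variables via $z=p(x)$ to turn the dispersion relation (\ref{eq:dispersion-relation}) into $\int_0^\infty (b_0c_0)(p^{-1}(z))\,e^{-\sigma z}\,dz=c_0(0)$ and then invoke Riemann--Lebesgue decay in $\sigma_i$ to contradict $c_0(0)>0$. The difference is that the paper applies Riemann--Lebesgue pointwise in $\sigma_r$: it checks that $B(z)e^{-\sigma_r z}\in L^1$ for each fixed $\sigma_r>0$ (which indeed needs only $\|b_0c_0\|_{L^\infty}<\infty$) and concludes that the integral vanishes as $\sigma_i\to\infty$. As you correctly point out, that argument by itself does not produce a single threshold $\Omega$ valid for all $\sigma_r>0$, because the $L^1$ bound $M/\sigma_r$ degenerates as $\sigma_r\downarrow 0$; so the paper's proof, read literally, leaves a gap exactly where the stated conclusion (one bounded interval containing all unstable $\sigma_i$) requires uniformity. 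Your compactness argument (continuity of $\sigma_r\mapsto g_{\sigma_r}$ into $L^1$ on $[0,\infty]$ by dominated convergence, plus uniform Riemann--Lebesgue on $L^1$-compacta), or equivalently the step-function approximation with the uniform bound $0<e^{-\sigma_r z}\le 1$, closes that gap. The price is the extra hypothesis $b_0\in L^1(\mathbb{R}^-)$, which is not in the statement of this theorem but is assumed in Theorem \ref{thm:disp-relation} and holds for the source terms considered in the paper, as you note; your identity $\|B\|_{L^1(0,\infty)}=\|b_0\|_{L^1(\mathbb{R}^-)}$ makes the transfer precise. In short: same strategy, but your version actually proves the uniform statement the theorem asserts, at the cost of making explicit an integrability assumption the paper uses tacitly.
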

\begin{proof}
By application of the Riemann-Lebesgue lemma, we find that 
\begin{eqnarray*}
\int_{-\infty}^{0}b_{0}(x)e^{-\sigma p(x)}dx & = & \int_{0}^{\infty}b_{0}(x)c_{0}(x)e^{-\sigma z}dz\\
 & = & \int_{0}^{\infty}\left(b_{0}(x)c_{0}(x)e^{-\sigma_{r}z}\right)e^{i\sigma_{i}z}dx\to0\mbox{ as }\sigma_{i}\rightarrow\infty
\end{eqnarray*}
provided that $b_{0}(p^{-1}(z))c_{0}(p^{-1}(z))e^{-\sigma_{r}z}\in L^{1}$.
If $\sigma_{r}>0$ and $b_{0}c_{0}$ is bounded, then it follows that
indeed $b_{0}(p^{-1}(z))c_{0}(p^{-1}(z))e^{-\sigma_{r}z}\in L^{1}$.
Therefore, the integral above vanishes as $\sigma_{i}\to\infty$,
which cannot happen because the integral should equal to $c_{0}\left(0\right)=u_{0s}/2>0$. \end{proof}
\begin{thm}
\label{thm:no-zero-eigenvalues}$\sigma=0$ is never an eigenvalue. \end{thm}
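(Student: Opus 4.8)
The plan is to plug $\sigma=0$ straight into the closed form (\ref{eq:v(x)}) for the normal-mode amplitude and show it can never be a nontrivial admissible eigenfunction; I would not route through Theorem~\ref{thm:disp-relation}, whose dispersion relation was derived under the hypothesis $\Re(\sigma)>0$. Since the right-hand side of (\ref{eq:v(x)}) vanishes identically when $v(0)=0$, any genuine eigenfunction must have $v(0)\ne 0$. Setting $\sigma=0$ and using $p'(x)=-1/c_0(x)$ gives
\[
v(x)=\frac{v(0)}{c_0(x)}\left[c_0(0)-\int_{x}^{0}b_0(\xi)\,d\xi\right],\qquad x\le 0 .
\]
It then suffices to evaluate $\lim_{x\to-\infty}\int_{x}^{0}b_0(\xi)\,d\xi$ and show the bracket tends to a nonzero limit: because $c_0(x)\to 0$ as $x\to-\infty$ in the CJ case, $v$ is then unbounded on $\mathbb{R}^{-}$, hence not in $L^{1}$, contradicting admissibility.

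The second step is to compute that limit from the two pieces of $b_0=\frac{\partial f}{\partial u_s}(x,u_{0s})+\tfrac12 c_0'(x)$. The second piece integrates exactly: $\int_{x}^{0}\tfrac12 c_0'(\xi)\,d\xi=\tfrac12\bigl(c_0(0)-c_0(x)\bigr)\to\tfrac12 c_0(0)$, since $c_0(-\infty)=0$ for the CJ solution. For the first piece I would differentiate under the integral sign and invoke the standing normalization $\int_{-\infty}^{0}f(x,u_s)\,dx=q/2$, which holds for \emph{every} value of the second argument: thus $\int_{-\infty}^{0}\frac{\partial f}{\partial u_s}(x,u_{0s})\,dx=\frac{d}{du_s}\bigl(\int_{-\infty}^{0}f(x,u_s)\,dx\bigr)=\frac{d}{du_s}(q/2)=0$, the derivative being evaluated at $u_s=u_{0s}$. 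Hence $\int_{x}^{0}b_0\,d\xi\to\tfrac12 c_0(0)$, so the bracket above tends to $c_0(0)-\tfrac12 c_0(0)=\tfrac12 c_0(0)$, which is strictly positive because $c_0(0)=u_{0s}/2>0$ by the Lax condition. Consequently $|v(x)|\to\infty$ as $x\to-\infty$, which is the contradiction; so $\sigma=0$ is never an eigenvalue. The same argument covers the overdriven case ($c_0(-\infty)=c_\infty>0$): there $v(x)$ instead converges to the nonzero constant $v(0)\bigl(c_0(0)+c_\infty\bigr)/(2c_\infty)$ and so still fails to lie in $L^{1}(\mathbb{R}^{-})$.

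The calculation is essentially a one-liner, and I expect no substantive obstacle. The only points that deserve a word of justification are (i) the interchange of $\partial/\partial u_s$ with $\int_{-\infty}^{0}$, which follows from the paper's standing smoothness and integrability assumptions on $f$ and $\frac{\partial f}{\partial u_s}$ (equivalently, from $\|b_0\|_{L^{1}}<\infty$, which also guarantees that $\int_{x}^{0}b_0\,d\xi$ converges as $x\to-\infty$), and (ii) the fact that $c_0(x)\to 0$ as $x\to-\infty$, which is precisely the defining property of the CJ solution. Conceptually, the mechanism is that fixing the total heat release forces the $u_s$-sensitivity of the source to have zero mean, which is exactly what unbalances the dispersion relation at $\sigma=0$.
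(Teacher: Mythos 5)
Your proof is correct and is essentially the paper's argument: both reduce to the same two facts, namely that $\int_{-\infty}^{0}\tfrac12 c_0'(\xi)\,d\xi=\tfrac12 c_0(0)$ while $\int_{-\infty}^{0}\frac{\partial f}{\partial u_s}(\xi,u_{0s})\,d\xi=0$ because the total heat release is fixed, so the $\sigma=0$ condition $\int_{-\infty}^{0}b_0\,d\xi=c_0(0)$ fails by exactly $\tfrac12 c_0(0)>0$. The only (welcome) extra care in your version is that you justify directly from (\ref{eq:v(x)}) why failure of that condition forces $v\notin L^1$ at $\sigma=0$ (via the $1/c_0(x)$ prefactor rather than the $e^{\sigma p}$ factor used for $\Re\sigma>0$), and you note the overdriven case as well.
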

\begin{proof}
The condition $\int_{-\infty}^{0}b_{0}\left(\xi\right)e^{-\sigma p\left(\xi\right)}d\xi-c_{0}\left(0\right)=0$
is still necessary for the eigenfunctions to remain bounded, even
when $\sigma=0$. Therefore, $\int_{-\infty}^{0}b_{0}\left(\xi\right)d\xi-c_{0}\left(0\right)=0$,
or equivalently 
\begin{eqnarray*}
\int_{-\infty}^{0}\left[\frac{\partial f}{\partial u_{s}}\left(\xi,u_{0s}\right)+\frac{1}{2}c_{0}(\xi)'\right]d\xi-c_{0}\left(0\right) & = & 0,\\
\int_{-\infty}^{0}\frac{\partial f}{\partial u_{s}}\left(\xi,u_{0s}\right)d\xi & = & c_{0}(0)/2.
\end{eqnarray*}
Since we assume that $f$ integrates to a constant, then 
\begin{eqnarray*}
\int_{-\infty}^{0}\frac{\partial f}{\partial u_{s}}\left(\xi,u_{0s}\right)d\xi & = & \frac{d}{du_{s}}\int_{-\infty}^{0}f\left(\xi,u_{0s}\right)d\xi=0.
\end{eqnarray*}
But $c_{0}(0)=u_{0s}/2>0$, and therefore no such eigenvalue can exist.
Thus, at the onset of instability, the eigenvalues must have non-zero
frequency. 
\end{proof}
Because $\sigma=0$ is never an eigenvalue, when the behavior of the
system as a function of parameters is explored, the transition from
a stable steady state to instability usually involves a Hopf bifurcation.
In our numerical calculations we find that this bifurcation is a supercritical
Hopf bifurcation, so that a stable time periodic solution takes over
from the steady state.

\section{\label{sec:An-example}An example}

In the previous section, we presented necessary and sufficient conditions
for the normal-mode instability of a traveling wave profile. We now
focus on a specific choice of $f(x,u_{s})$ and illustrate with it
the general results on the linear instability. We also examine, by
means of direct numerical simulations, what happens once the traveling-wave
solution becomes unstable as a bifurcation parameter is varied. The
example mimics, on a qualitative level, a situation wherein the chemical
reaction has an induction zone that delays the beginning of an energetic
exothermic reaction. The idea is to have a function that peaks at
some distance away from the shock, with this distance depending on
the shock strength. A simple choice for such a function is 
\[
f=\frac{q}{2}\frac{1}{\sqrt{4\pi\beta}}\exp\left[-\frac{\left(x-x_{i}\left(u_{s}\right)\right)^{2}}{4\beta}\right].
\]
Here, $x_{i}$ is the point where $f$ peaks and that point depends
on the current state at the shock, $u_{s}=u\left(0,t\right)$. The
parameter $\beta$ determines the width of the reaction zone. As $\beta\to0$,
$f$ tends to $\frac{q}{2}\delta\left(x-x_{i}\right)$; this limit
yields what is called a square-wave profile, wherein $f$ kicks in
only at $x=x_{i}$. We choose $x_{i}$ as 
\[
x_{i}\left(t\right)=-k\left(\frac{u_{0s}}{u_{s}\left(t\right)}\right)^{\alpha},
\]
which depends on the shock strength, $u_{s}$, the steady-state shock
strength, $u_{0s}$, and the parameters $k>0$ and $\alpha\ge0$.
Remembering the connection with the weakly nonlinear model, where
$f=q\lambda_{x}/2$, we require that 
\begin{equation}
\int_{-\infty}^{0}f(x,u_{s})\ dx=\frac{q}{2},\label{eq:f_integral}
\end{equation}
and thus renormalize $f$ as follows%
\footnote{Note that in \cite{kasimov2012model,kasimov2013model} $f$ was not
renormalized. %
}: 
\[
f\to\frac{q}{2}\frac{f}{\int_{-\infty}^{0}fdx}=\frac{q}{\left(1+\text{Erf}\left[\frac{k\left(\frac{u_{s}}{u_{0s}}\right)^{-\alpha}}{2\sqrt{\beta}}\right]\right)\sqrt{4\pi\beta}}\exp\left[-\frac{\left(x+k\left(\frac{u_{0s}}{u_{s}}\right)^{\alpha}\right)^{2}}{4\beta}\right].
\]
Next, the variables are rescaled as follows: $u=u_{0s}\tilde{u},\quad x=k\tilde{x},\quad t=k\tilde{t}/u_{0s},$
and $\beta=k^{2}\tilde{\beta}$, where the variables with the tildes
are now dimensionless. Using $u_{0s}=2\zeta\sqrt{q}$, that follows
from (\ref{eq:u_0s_overdriven}) and (\ref{eq:f_integral}), equation
(\ref{eq:KFR-equation}) takes the following dimensionless form 
\begin{equation}
\tilde{u}_{\tilde{t}}+\left(\frac{\tilde{u}^{2}}{2}-\frac{\tilde{u}\tilde{u}\left(0,\tilde{t}\right)}{2}\right)_{\tilde{x}}=\tilde{f}(\tilde{x},\tilde{u}_{s}),\label{eq:KFRexample_dimensionless}
\end{equation}
where
\begin{equation}
\tilde{f}(\tilde{x},\tilde{u_{s}})=\frac{1}{4\zeta^{2}\left(1+\text{Erf}\left[\frac{\tilde{u}(0,\tilde{t})^{-\alpha}}{2\sqrt{\tilde{\beta}}}\right]\right)}\frac{1}{\sqrt{4\pi\tilde{\beta}}}\exp\left[-\frac{\left(\tilde{x}+\left(\tilde{u}\left(0,\tilde{t}\right)\right)^{-\alpha}\right)^{2}}{4\tilde{\beta}}\right].\label{eq:forcing_dimensionless}
\end{equation}

This equation contains only three parameters, $\alpha$, which is
a measure of the shock-state sensitivity of the source function (analogous
to the activation energy in Euler detonations), $\tilde{\beta}=\beta/k^{2}$,
which is the width of $\tilde{f}$ (analogous to the ratio of the
reaction-zone length, $\sqrt{\beta},$ and the induction-zone length,
$k$), and $\zeta$, which is the overdrive factor. The role of the
latter is now easily appreciated: it scales the forcing term by $\text{\ensuremath{\zeta}}^{-2}$
such that the overdrive reduces the magnitude of the forcing and hence
has a stabilizing effect.

Our focus below is on the Chapman-Jouguet case, $\text{\ensuremath{\zeta}}=1$,
which leaves only $\alpha$ and $\beta$ as the parameters of the
model. Although the expression for the forcing is a little bit cumbersome,
its shape is simply that of a Gaussian shifted to the left of $x=0$
by $\tilde{u}(0,\tilde{t})^{-\alpha}$ and renormalized to integrate
to a constant on $(-\infty,0)$. A few examples of $\tilde{f}$ are
shown in Fig. \ref{fig:Plot-of-reaction-rate-and-steady-state}(a)
for different values of $u_{s}$ and fixed $\alpha$, $\beta$. The
main qualitative feature of $\tilde{f}$ is that it has a maximum
at some distance from $x=0$ and that the maximum is close to the
shock when $u_{s}$ is large and far from the shock when $u_{s}$
is small. These features mimic the behavior of the reaction rate in
Euler equations as a function of the lead-shock speed. 

\begin{figure}[h]
\centering{}\includegraphics[height=2.5in]{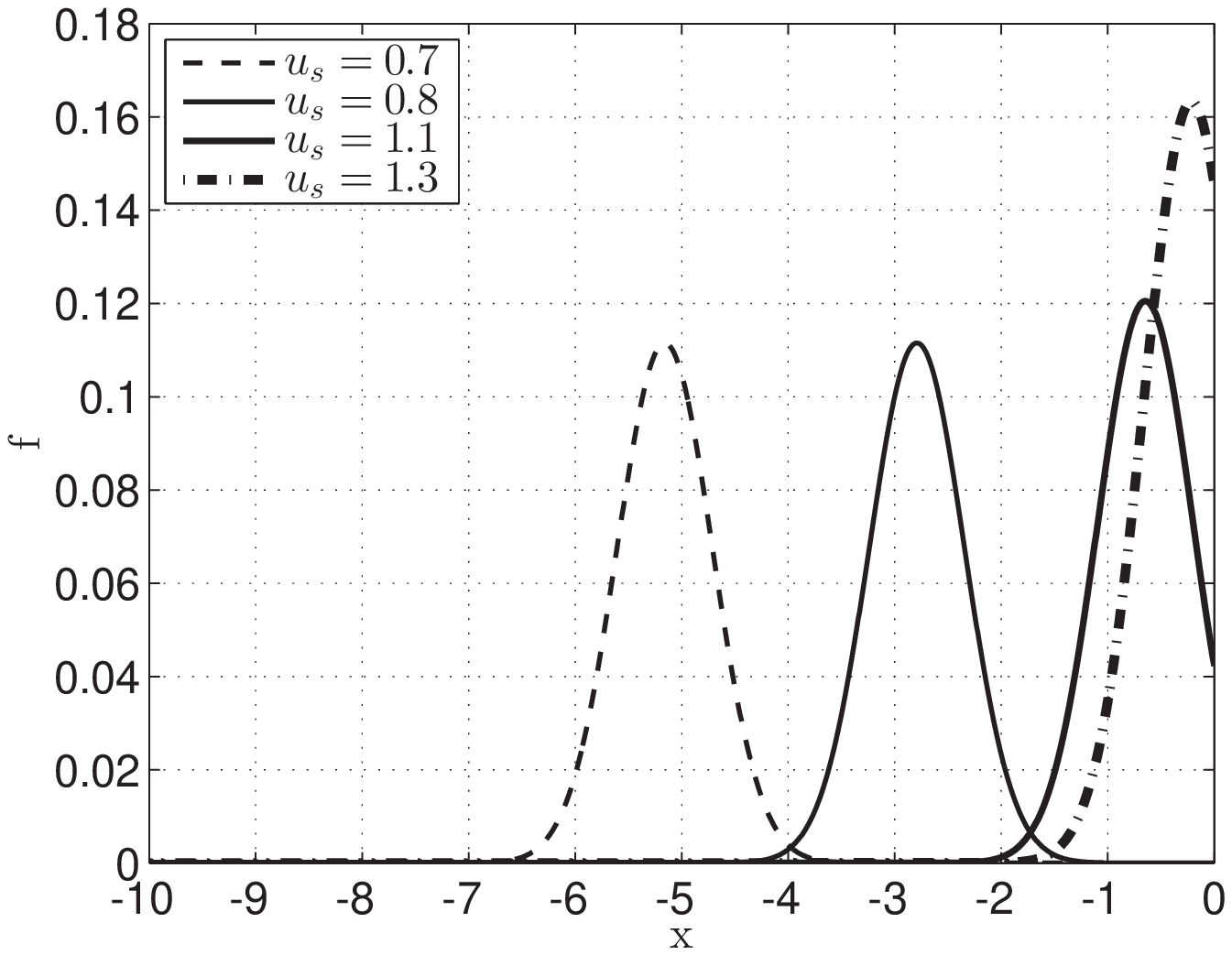}
\includegraphics[height=2.5in]{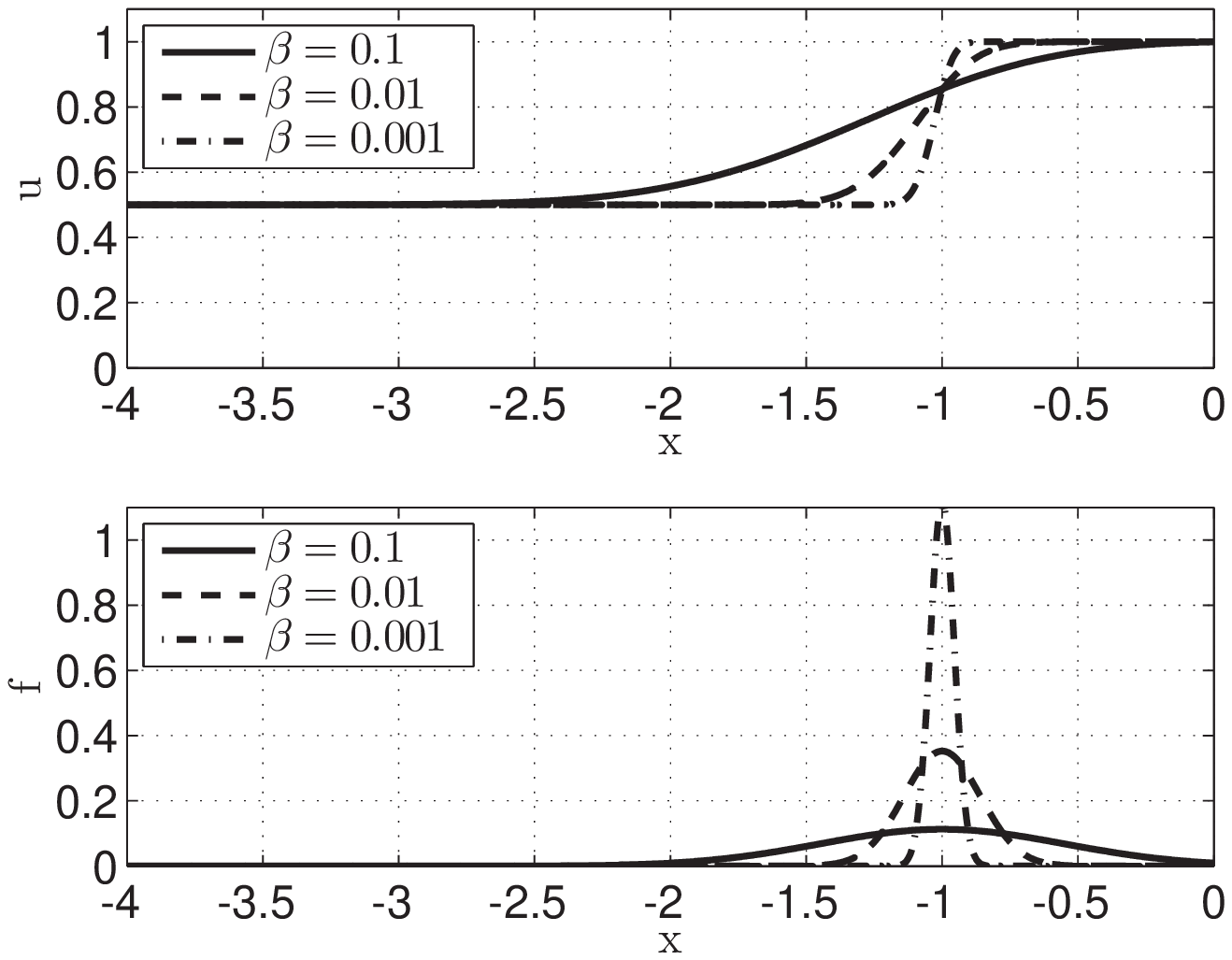}
\caption{\label{fig:Plot-of-reaction-rate-and-steady-state}(a) The forcing
term at various $u_{s}$. (b) Steady-state profiles and the forcing
function as $\beta$ is varied.}
\end{figure}

From now on, we drop the tilde notation, but it should be understood
that all the variables below are dimensionless.

\subsection{Steady-state solutions}

Steady-state Chapman-Jouguet solutions can be computed as shown in
Section \ref{sub:Steady-state-solutions}. Figure \ref{fig:Plot-of-reaction-rate-and-steady-state}(b)
shows how $\beta$ affects the traveling wave profile. The picture
suggests a square-wave solution in the limit $\beta\rightarrow0$. 

It is important to remember that $\alpha$ plays no role in the steady-state
profiles because $u_{0s}=1$ in dimensionless form. In some sense,
$\alpha$ represents the sensitivity to changes in the steady-state
profile. Next, we study the linear stability of these traveling wave
profiles in the $\alpha-\beta$ parameter space.

\subsection{Linear stability analysis}

\subsubsection{The dispersion relation}

By Theorem \ref{thm:disp-relation}, spectral instability is equivalent
to (\ref{eq:dispersion-relation}) provided $\Vert b_{0}\Vert_{L^{1}}<\infty$.
A straightforward computation shows that 
\[
\|b_{0}\|_{L^{1}}=\int_{-\infty}^{0}\left|\frac{\partial f}{\partial u_{s}}(x,u_{0s})+\frac{f(x,u_{0s})}{2c_{0}(x)}\right|dx<\infty,
\]
and therefore spectral stability of (\ref{eq:KFRexample_dimensionless})
is equivalent to 
\[
\int_{-\infty}^{0}b_{0}\left(\xi\right)e^{-\sigma p\left(\xi\right)}d\xi=c_{0}(0),
\]
where $b_{0},c_{0}$, and $p$ are defined as in Section \ref{sub:Spectral-stability}.
Although we have reduced the spectral stability of our problem to
finding complex roots of a single equation, the equation is (although
analytic in $\sigma$) numerically difficult. For a given $\alpha$
and $\beta$, an equation with three levels of nested integration
must be solved,

\begin{equation}
\int_{-\infty}^{0}\left\{ \left(\frac{\partial f\left(\xi,u_{0s}\right)}{\partial u_{s}}+\frac{\partial}{\partial\xi}\sqrt{\frac{1}{2}\int_{-\infty}^{\xi}f\left(y,u_{0s}\right)dy}\right)\exp\left[-\sigma\int_{\xi}^{0}\frac{dx}{\sqrt{2\int_{-\infty}^{x}f\left(y,u_{0s}\right)dy}}\right]\right\} d\xi-\frac{u_{0s}}{2}=0,\label{eq:eig-eq-expanded}
\end{equation}
where $\sigma=\sigma_{r}+i\sigma_{i}$ and $f$ is given by (\ref{eq:forcing_dimensionless}).
Interestingly, the original formulation of the linear stability problem
by Erpenbeck \cite{Erpenbeck62} requires the same three levels of
numerical integration (the steady-state solution, then the solution
of the adjoint homogeneous problem, and then the evaluation of the
dispersion relation). In general, these integrals require nearly machine-precision
evaluation of the functions in the integrands in order to obtain the
eigenvalues with only a few significant digits of accuracy. Except
for the limiting case of $\beta=0$, we find the roots numerically
using Matlab's \emph{fsolve} function, that uses a version of Newton's
method, and then we use Cauchy's argument principle to verify that
we have found all the \textcolor{black}{roots in a given region of
the complex plane. Here, Theorem 2 plays a fundamental role, since
it tells us that all eigenvalues must be within a finite region. When
$\beta=0$, we compute the roots analytically, and they serve as initial
guesses in the numerical continuation root-finding procedure when
$\beta$ is small.}

\subsubsection{\label{sub:The-square-wave-limit}The square-wave limit}

When $\beta\rightarrow0$, we obtain the square-wave solution. In
this limit, it can be shown that 
\[
\frac{\partial}{\partial u_{s}}f(x,u_{0s})=-\alpha\frac{\partial}{\partial x}f(x,u_{0s})+O\left(\frac{1}{\sqrt{\beta}}e^{-\frac{1}{4\beta}}\right)f(x,u_{0s}).
\]
Even though $f\left(x,u_{0s}\right)$ tends to a delta function when
$\beta\to0$, this function is integrated in the dispersion relation
and, therefore, the contribution of the second term above to the dispersion
relation is exponentially small in the limit due to the $O\left(\frac{1}{\sqrt{\beta}}e^{-\frac{1}{4\beta}}\right)$
factor. In the limit, the dispersion relation (\ref{eq:dispersion-relation})
becomes 
\begin{eqnarray*}
\int_{-\infty}^{0}b_{0}(x)\ e^{-\sigma p\left(x\right)}dx & = & \int_{-\infty}^{0}\left(\frac{\partial f}{\partial u_{s}}(x,u_{0s})+\frac{1}{2}\frac{\partial}{\partial x}(c_{0}(x))\right)e^{-\sigma p(x)}\ dx\\
 & = & \int_{-\infty}^{0}\left(-\alpha\frac{\partial f}{\partial x}(x,u_{0s})\right)e^{-\sigma p(x)}dx+\int_{-\infty}^{0}\left(\frac{1}{2}\frac{\partial}{\partial x}(c_{0}(x))\right)e^{-\sigma p(x)}\ dx.
\end{eqnarray*}
Integrating by parts, we find that 
\begin{align*}
-\alpha\int_{-\infty}^{0}\frac{\partial f}{\partial x}\left(x,u_{0s}\right)e^{-\sigma p(x)}dx+\frac{1}{2}\int_{-\infty}^{0}\frac{\partial}{\partial x}\left(c_{0}(x)\right)e^{-\sigma p(x)}dx & =c_{0}(0),\\
-\alpha\left[f(0,u_{0s})-\sigma\int_{-\infty}^{0}\frac{f\left(x,u_{0s}\right)}{c_{0}(x)}e^{-\sigma p(x)}dx\right]+\frac{1}{2}\int_{-\infty}^{0}\frac{\partial}{\partial x}\left(c_{0}(x)\right)e^{-\sigma p(x)}dx & =c_{0}(0),\\
-\alpha f(0,u_{0s})+\left(\alpha\sigma+\frac{1}{2}\right)\int_{-\infty}^{0}\frac{\partial}{\partial x}\left(c_{0}(x)\right)e^{-\sigma p(x)}dx & =c_{0}(0),\\
-\alpha f(0,u_{0s})+\left(\alpha\sigma+\frac{1}{2}\right)\left[c_{0}(0)-\sigma\int_{-\infty}^{0}e^{-\sigma p(x)}dx\right] & =c_{0}(0),\\
-\alpha f(0,u_{0s})+\alpha\sigma c_{0}(0)-\left(\alpha\sigma^{2}+\frac{\sigma}{2}\right)\int_{-\infty}^{0}e^{-\sigma p(x)}dx & =\frac{c_{0}(0)}{2},\\
-\alpha f(0,u_{0s})+\alpha\sigma c_{0}(0)-\left(\alpha\sigma^{2}+\frac{\sigma}{2}\right)\int_{0}^{\infty}c_{0}(x)e^{-\sigma z}dz & =\frac{c_{0}(0)}{2}.
\end{align*}
Noticing that
\[
c_{0}(x)\rightarrow\begin{cases}
\frac{1}{2} & x\geq-1\\
0 & x<-1
\end{cases}
\]
and 
\[
p(x)=\int_{x}^{0}\frac{1}{c_{0}(y)}dy\rightarrow\begin{cases}
\infty & x<-1\\
-2x & x\geq-1,
\end{cases}
\]
we obtain
\begin{align*}
\lim_{\beta\rightarrow0}\left[-\alpha f(0,u_{0s})+\alpha\sigma c_{0}(0)-\left(\alpha\sigma^{2}+\frac{\sigma}{2}\right)\int_{0}^{\infty}c_{0}\left(p^{-1}(z)\right)e^{-\sigma z}dz-\frac{c_{0}(0)}{2}+o(1)\right] & =\\
\frac{\alpha\sigma}{2}-\left(\frac{\alpha\sigma^{2}}{2}+\frac{\sigma}{4}\right)\int_{0}^{2}e^{-\sigma x}dx-\frac{1}{4} & =\\
\left(\frac{\alpha\sigma}{2}+\frac{1}{4}\right)e^{-2\sigma}-\frac{1}{2} & =0.
\end{align*}
Therefore, the dispersion relation in the square-wave limit takes
a very simple form of a transcendental equation 
\begin{equation}
e^{2\sigma}=\alpha\sigma+\frac{1}{2}.\label{eq:square-wave-dispersion}
\end{equation}
This dispersion relation has exactly the same form as that of Fickett's
analog \cite{fickett1985stability}, which in his case, arose from
his differential-difference equation for shock perturbation. Therefore,
it predicts the same pathological instability as in the classical
square-wave detonations. Pathological instability implies that the
linear stability problem for the square wave is ill-posed in the sense
of Hadamard. For completeness, we exhibit below the solutions to this
equation, since they are used as initial guesses in our algorithm
to compute the solutions when $\beta$ is small, but not zero. Let
$\sigma=\sigma_{r}+i\sigma_{i}$ and separate the real and imaginary
parts of (\ref{eq:square-wave-dispersion}), 
\begin{alignat*}{1}
 & e^{2\sigma_{r}}\cos\left(2\sigma_{i}\right)=\alpha\sigma_{r}+\frac{1}{2},\\
 & e^{2\sigma_{r}}\sin\left(2\sigma_{i}\right)=\alpha\sigma_{i}.
\end{alignat*}
If $\sigma_{r}$ is to be large, the first equation requires $\cos\left(2\sigma_{i}\right)$
to be small, i.e., $\sigma_{i}$ should be close to $\pi/4+n\pi/2,$
$n=0,1,2,...$. We let 
\[
\sigma_{i}=\frac{\pi}{4}+\frac{n\pi}{2}+\varepsilon,
\]
where $\varepsilon$ is a small correction. Then, from the second
equation, we find $\sin\left(2\sigma_{i}\right)\approx1$ and therefore
$\sigma_{r}\approx\frac{1}{2}\ln\left(\alpha\sigma_{i}\right)$. For
this $\sigma_{r}$ to be large, we need $n$ to be large, in which
case 
\[
\sigma_{r}\approx\frac{1}{2}\ln\left(n\right).
\]
Thus, the square-wave dispersion relation admits arbitrarily large
growth rates that occur at simultaneously large frequencies. It is
interesting that the growth rate increases with frequency logarithmically.
\textcolor{black}{Similar growth happens in the square-wave model
of detonations in the reactive Euler equations (see, e.g., \cite{zaidel1961stability,erpenbeck1963squarewave,buckmaster1988one,short1996asymptotic,short1997multidimensional,he1999two}).
However, in the latter, the dispersion relation involves several exponential
functions due to the presence of multiple time scales associated with
different families of waves propagating from the shock into the reaction
zone. Waves of different families of characteristics propagate at
different speeds resulting in several different time intervals for
the signals to propagate from the shock to the ``fire'' and back.
Since in the limit of large frequencies one of the exponentials dominates,
the dispersion relation becomes essentially the same as in our model.
In the numerical calculations of detonation instability in the Euler
equations with finite-rate chemistry, but high activation energies
\cite{short1997multidimensional}, a similarly slow growth can be
seen. However, we do not know if the growth is logarithmic in frequency. }
\begin{rem*}
Theorem \ref{thm:no-large-realpart-eig} is not contradicted here
since $\|b_{0}c_{0}\|\notin L^{\infty}$ in the limit, because now
$f\notin L^{\infty}$.
\end{rem*}

\subsubsection{The unstable spectrum for $\beta>0$}

The pathological instability of the model as $\beta\rightarrow0$
was shown to be caused by an infinite number of unstable eigenvalues,
with the real part arbitrarily large. From theorem \ref{thm:no-large-realpart-eig},
we know that if $\|b_{0}c_{0}\|_{L^{\infty}}=M<\infty$, then there
can be no unstable eigenvalues with $\sigma_{r}>M/c_{0}\left(0\right)$.
A quick computation shows that if $\alpha<\infty$ and $\beta>0$,
then the real part of the unstable spectrum of (\ref{eq:KFRexample_dimensionless})
is bounded from above.

Next, we fix $\alpha=4.05$ and numerically investigate the effect
of $\beta$ on the eigenvalues. Using as initial guess the eigenvalues
found from the square-wave dispersion relation, (\ref{eq:square-wave-dispersion}),
we use Matlab's numerical root finder, \emph{fsolve,} to locate the
eigenvalues for successively larger values of $\beta$. Figure \ref{fig:Spectrum-for-beta-varied}
shows the results, reaffirming that for any value of $\beta>0$, there
is only a finite number of unstable eigenvalues. Furthermore, it suggests
that the magnitude of $\beta$ is closely related to the frequencies
of the unstable eigenvalues. This can be understood as follows: as
the shock is perturbed, it creates waves that propagate into the reaction
zone. If $\beta$ is large enough, the reaction zone is smooth and
there is little resonance between the shock and the peak of the reaction
in the reaction zone. However, as $\beta$ is decreased, the sharp
peak in the reaction zone reflects waves back to the shock and this
resonance causes the instability. If $\beta$ is small but positive,
then high enough frequencies do not ``see'' the sharp peak in the
reaction rate and are not reflected back to the shock. 

\begin{figure}[h]
\begin{centering}
\subfloat[\label{fig:Spectrum-for-beta-varied}The spectrum for $\alpha=4.05$
with $\beta$ varied.]{\begin{centering}
\includegraphics[height=2.5in]{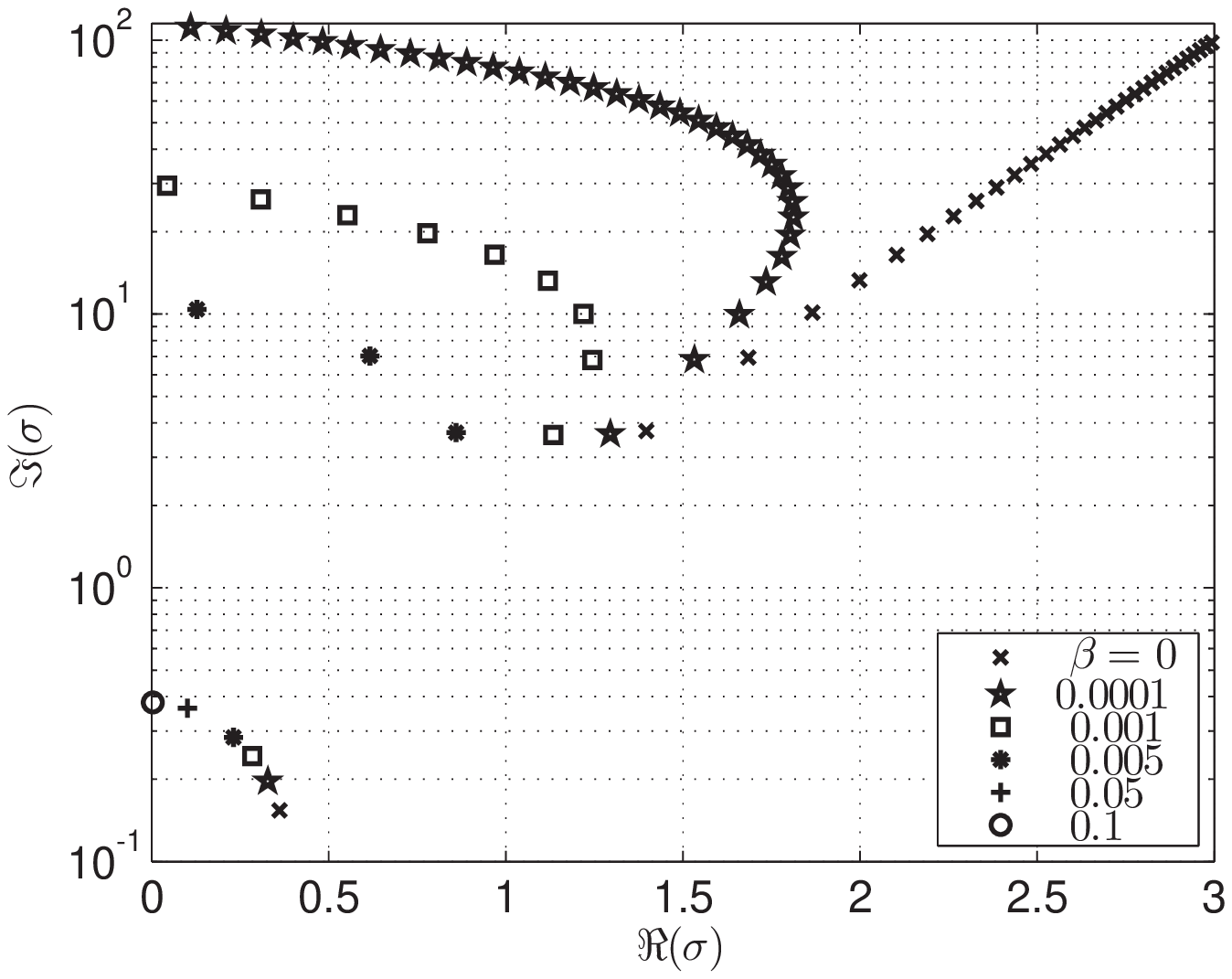}
\par\end{centering}

}\subfloat[\label{fig:Spectrum-for-alpha-varied}The spectrum for $\beta=0.001$
with $\alpha$ varied.]{

\includegraphics[height=2.5in]{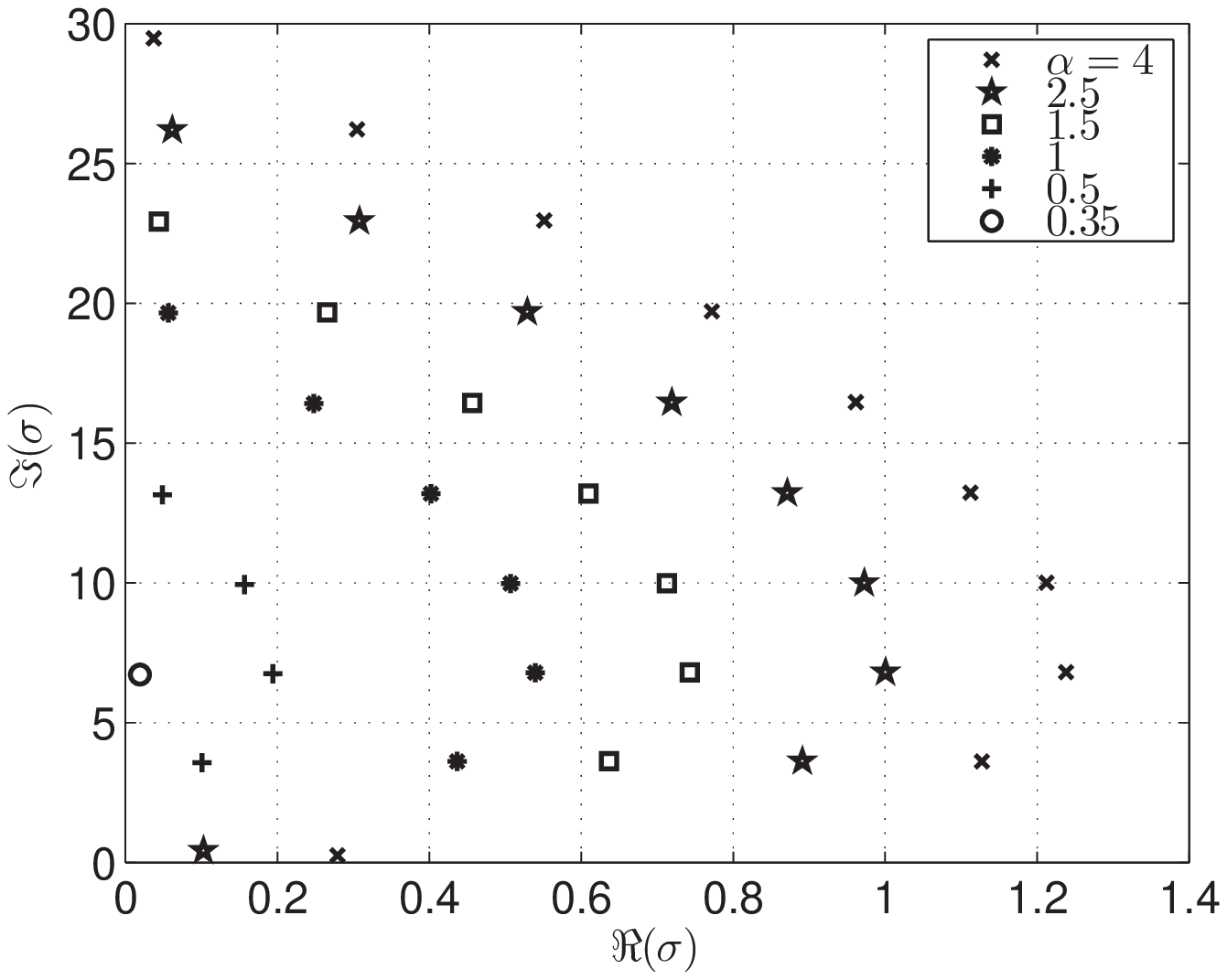}

}
\par\end{centering}

\caption{\label{fig:Spectrum}The linear spectrum.}
\end{figure}

We also look at the effect of $\alpha$ on the distribution of the
eigenvalues. In Fig. \ref{fig:Spectrum-for-alpha-varied}, we show
the spectrum for fixed $\beta=0.001$ and varying $\alpha$. This
figure suggests that the eigenvalues are merely shifted when $\alpha$
is decreased. Interestingly, the dominant eigenvalue, i.e. the one
with the largest real part, is always the same as we change $\alpha$
and keep $\beta$ fixed. This observation was tested for different
values of $\beta$. As $\beta$ decreases, the frequency of the most
unstable mode is seen to increase.

To ensure that no roots of the dispersion relation have been lost
in the numerical computations, we apply the argument principle to
(\ref{eq:dispersion-relation}). Since 
\[
F(\sigma)=\int_{-\infty}^{0}b_{0}\left(\xi\right)e^{-\sigma p\left(\xi\right)}d\xi-c_{0}(0)
\]
has no poles in the region $\sigma_{r}\geq0$ (which follows from
$\|b_{0}\|_{L_{1}}<\infty$), the argument principle guarantees that
the number of zeroes, $N$, of $F(\sigma)$ in a closed contour $C$
(counting multiplicity) is given by 
\[
N=\frac{1}{2\pi i}\int_{C}\frac{F'(z)}{F(z)}dz.
\]
This can be related to the winding number of a curve by the substitution
$w=F(z)$, which yields 
\[
N=\frac{1}{2\pi i}\int_{F(C)}\frac{1}{w}dw.
\]
We show in Fig. \ref{fig:argument-principle} two Nyquist plots of
the dispersion relation, corresponding to parameters with $2$ and
$20$ unstable eigenvalues. The predictions agree with the number
of roots found using the root solver. 

\begin{figure}[h]
\begin{centering}
\subfloat[\label{fig:Nyquist-alpha-4.05-beta-0.05}$\alpha=4.05$, $\beta=0.05$.
Weakly unstable case with two eigenvalues, one shown in Fig. \ref{fig:Spectrum}
and its complex conjugate.]{\includegraphics[height=2.5in]{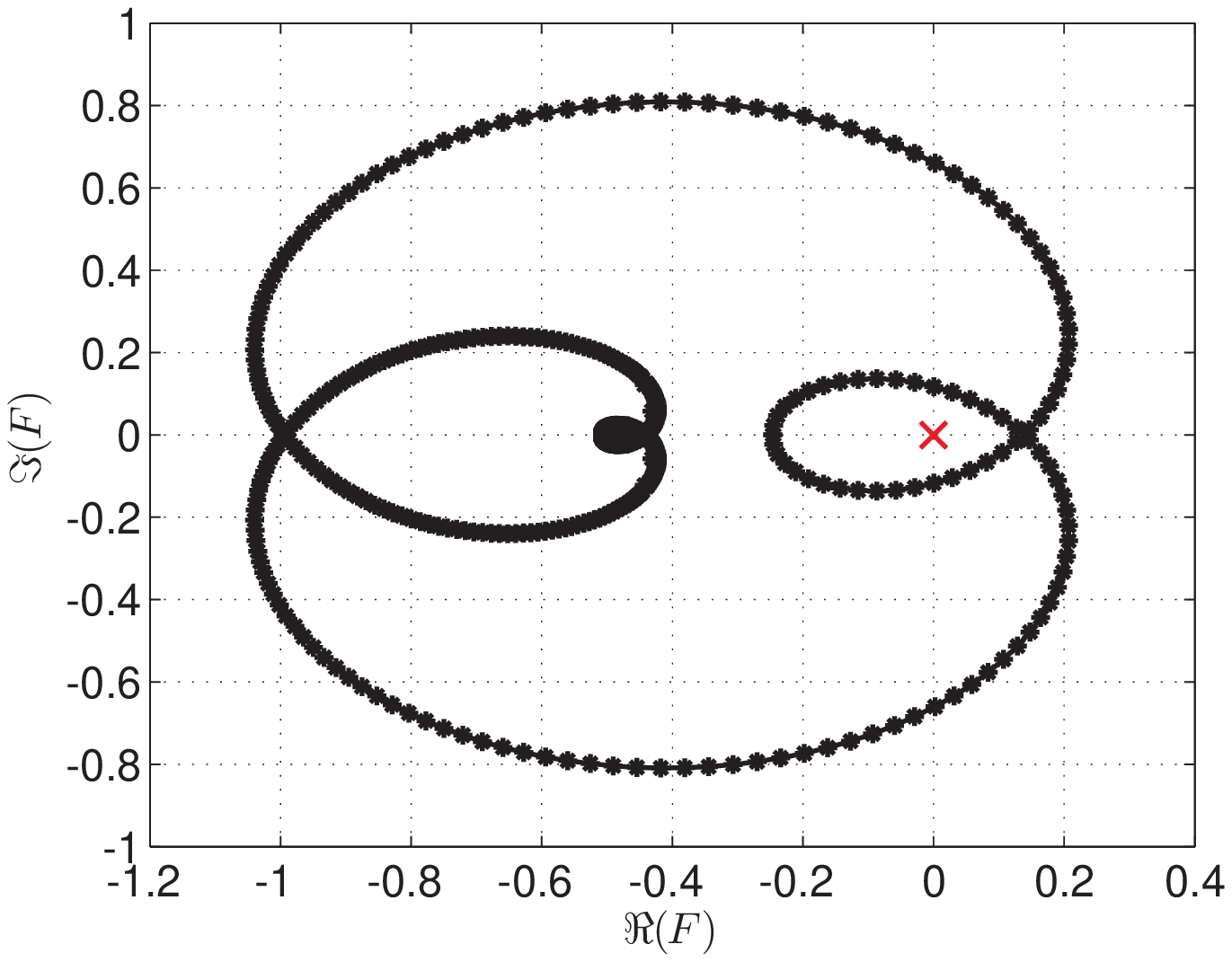}}
\subfloat[\label{fig:Nyquist-alpha-4.05-beta-0.005}$\alpha=4.05$, $\beta=0.005$.
Highly unstable case with twenty eigenvalues, ten shown in Fig. \ref{fig:Spectrum}
and their complex conjugates.]{\includegraphics[height=2.5in]{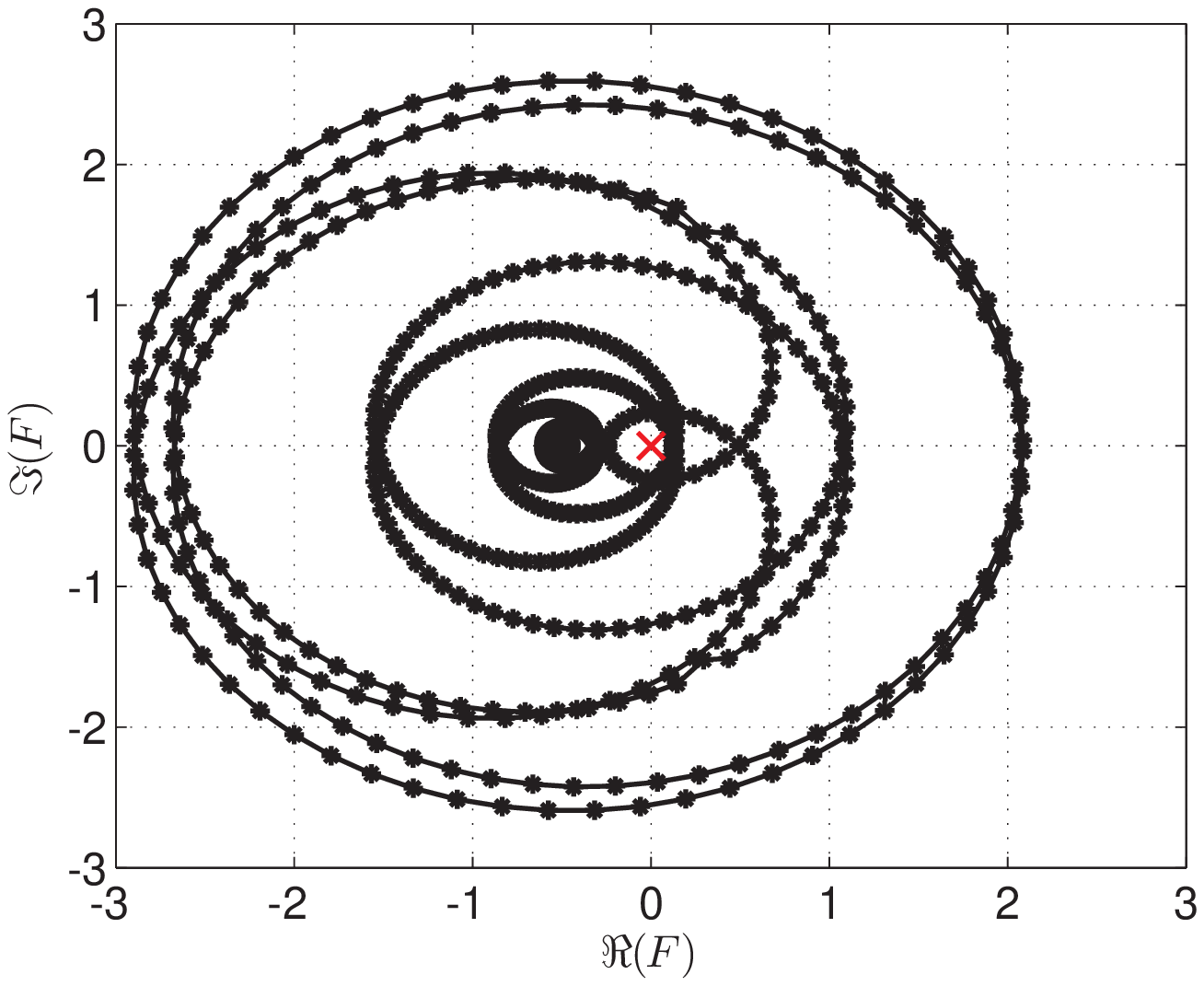}

}
\par\end{centering}

\caption{\label{fig:argument-principle}Values of $w=F\left(z\right)$ along
a large semi-circle in the right-half plane of the $z$-plane (radius
10 for \ref{fig:Nyquist-alpha-4.05-beta-0.05} and 100 for \ref{fig:Nyquist-alpha-4.05-beta-0.005}),
plotted in the $F$-plane. The total number of loops around the origin
in the $F$-plane gives the winding number, which is equal to the
number of unstable eigenvalues. }
\end{figure}

\subsubsection{The neutral curves}

We follow the first five unstable eigenvalues (ordered according to
their imaginary part) and show their neutral curves in Fig. \ref{fig:The-neutral-curve}.
We see that for large values of $\beta$, the lowest frequency eigenvalue
is the one that first becomes unstable, but for very small values
of $\beta$, the stability of the traveling wave is controlled by
the higher frequency perturbations. Moreover, the smaller the $\beta$,
the higher the frequency of the most unstable mode, consistent with
our earlier calculation of the square-wave-limit pathology. The whole
unstable region is given by the union of the unstable regions for
each eigenvalue and is generally located at large-enough $\alpha$
for any given $\beta$ or small-enough $\beta$ for any given $\alpha$. 

\begin{figure}[h]
\begin{centering}
\includegraphics[width=3in]{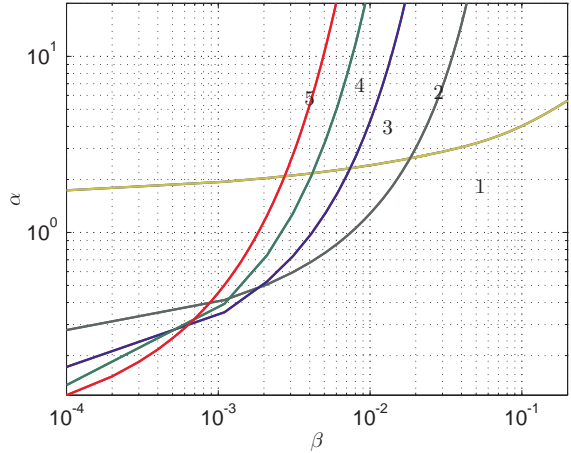}
\par\end{centering}

\caption{\label{fig:The-neutral-curve}The neutral curves for the first five
eigenvalues. The numbers next to each curve correspond to the index
of the eigenvalue. Below the envelope of the curves, we have discrete
spectral stability; in fact, numerical calculations indicate that
the solutions are stable at these parameters.}
\end{figure}

\subsection{Numerical simulations}

The previous section was concerned with the linear stability of traveling
wave solutions of (\ref{eq:KFRexample_dimensionless}). We were able
to compute the spectrum of unstable modes and obtain the neutral curves
in the $\alpha-\beta$ parameter space. In this section, we investigate
the behavior of solutions in the non-linear regime by numerically
solving the PDE using the WENO algorithm described in the Appendix.
All the simulations start with a steady-state solution, and instabilities
(when present) are triggered by the numerical discretization error
alone. The goal of this section is to demonstrate that, as in detonation
waves in the reactive Euler equations, the shock-dynamics goes through
the Hopf bifurcation followed by a period doubling cascade, when the
sensitivity parameter, $\alpha$, is varied, suggesting a possible
chaotic regime for large-enough $\alpha$.

\subsubsection{Linear growth and comparison with stability analysis }

We first compare the results obtained from the linear stability analysis
with the numerical results from the simulation. We perform a least-squares
fit on the deviation from the steady-state value of the form $\sum_{k=1}^{n}c_{k}e^{\sigma_{r_{k}}t}\cos(\sigma_{i_{k}}t+\delta_{k})$,
where $n$ is the number of unstable eigenvalues found in the linear
stability analysis. For instance, when $\beta=0.1$ and $\alpha=4.05$,
we expect from Fig. \ref{fig:Spectrum} one unstable mode to appear,
and thus, at least for a small time interval, we expect the solution
to behave like $e^{\sigma_{k}t}$, up to translation and scaling.
The results obtained from the comparison are presented in Table \ref{tab:Spectra-theory-vs-fit}.
We restrict ourselves to fitting up to two eigenvalues ($8$ parameters),
and fit up to a time when the perturbation is of the order $10^{-7}$.
The original perturbation is of the order $10^{-15}$. 

\begin{table}[h]
\begin{centering}
\begin{tabular}{|c|c|c|}
\hline 
$\beta$ & $\sigma$ from theory & $\sigma$ from numerics\tabularnewline
\hline 
\hline 
$0.10$ & $\begin{array}{cc}
 & 0.00309+0.38144i\end{array}$ & $\begin{array}{cc}
 & 0.00311+0.38152i\end{array}$\tabularnewline
\hline 
$0.01$ & $\begin{array}{cc}
 & 0.20092+0.30431i\\
 & 0.61295+3.78512i
\end{array}$ & $\begin{array}{cc}
 & 0.20581+0.29964i\\
 & 0.61298+3.78507i
\end{array}$\tabularnewline
\hline 
\end{tabular} 
\par\end{centering}

\caption{\label{tab:Spectra-theory-vs-fit}Comparison of eigenvalues from stability
analysis and from numerics at $\alpha=4.05$.}
\end{table}
The first case of $\beta=0.1$ in Table \ref{tab:Spectra-theory-vs-fit}
is near the neutral curve, and both the growth and frequency of the
perturbation are well captured by the linear stability predictions.
Simulations show that for this ``slightly unstable'' regime, the
predicted frequency is valid well into the nonlinear regime, an observation
often made in detonation simulations as well. In the second case,
when $\beta=0.01$, we see a larger discrepancy between the linear
theory and the numerical simulations, especially when capturing the
effect of the least unstable mode. This is to be expected, since the
effects of all unstable modes except for the most unstable one quickly
become negligible as the dominant mode starts to grow. This second
case is far from the neutral curve and much more unstable, with the
growth rate two orders of magnitude larger than in the first case.
Very fast growth of the perturbations is likely to result in nonlinear
effects starting to play an important role.

\subsubsection{Limit cycles and period-doubling bifurcations}

We now study the long-time asymptotic behavior of solutions that start
from a small perturbation (given by the discretization error) of the
initial steady-state solution. The shock value of the solution, $u_{s}(t)$,
is analyzed. For all the simulations that follow, we fix $\beta=0.1$
and vary $\alpha$. When $\alpha$ slightly exceeds the critical value
$\alpha_{c}\approx4.02$, predicted by the linear analysis as the
neutral boundary, the numerical solutions show that the steady-state
solution is unstable with the long-time evolution leading to a limit
cycle. 

For a range of $\alpha$ between $\alpha_{c}$ and $\alpha_{1}\approx4.72$,
the long time dynamics is that of a simple limit cycle (Fig. \ref{fig: us_periods}(a)).
Subsequent increase of $\alpha$ leads to a period doubling bifurcation.
When $\alpha$ is between $\alpha_{1}$ and $\alpha_{2}\approx4.91$,
we observe the limit cycle shown in Fig. \ref{fig: us_periods}(b).
This period doubling process continues until eventually, at $\alpha=\alpha_{\infty}\approx4.97$,
the solution (apparently) becomes chaotic. Figure \ref{fig: us_periods}(c)
illustrates the behavior of $u_{s}\left(t\right)$ for very large
values of $t$ (around $20,000$), when all the transients are likely
to have vanished. The respective power spectra, computed using a large
time window, $10,000<t<22,000$, are also shown. In the periodic case,
the power spectrum is clearly marked by peaks in the natural frequency
and its harmonics, as seen in Fig. \ref{fig: us_periods}(b,c). In
Fig. \ref{fig: us_periods}(c), we see that, although there is a dominant
frequency in the signal, many other frequencies are present, indicating
possible aperiodicity or chaos. Further analysis of the computational
results is required to establish whether the solution is indeed chaotic,
which is done in the subsequent sections.

\begin{figure}[h]
\subfloat[$\alpha=4.7$]{%
\begin{tabular}{c}
\includegraphics[height=4cm]{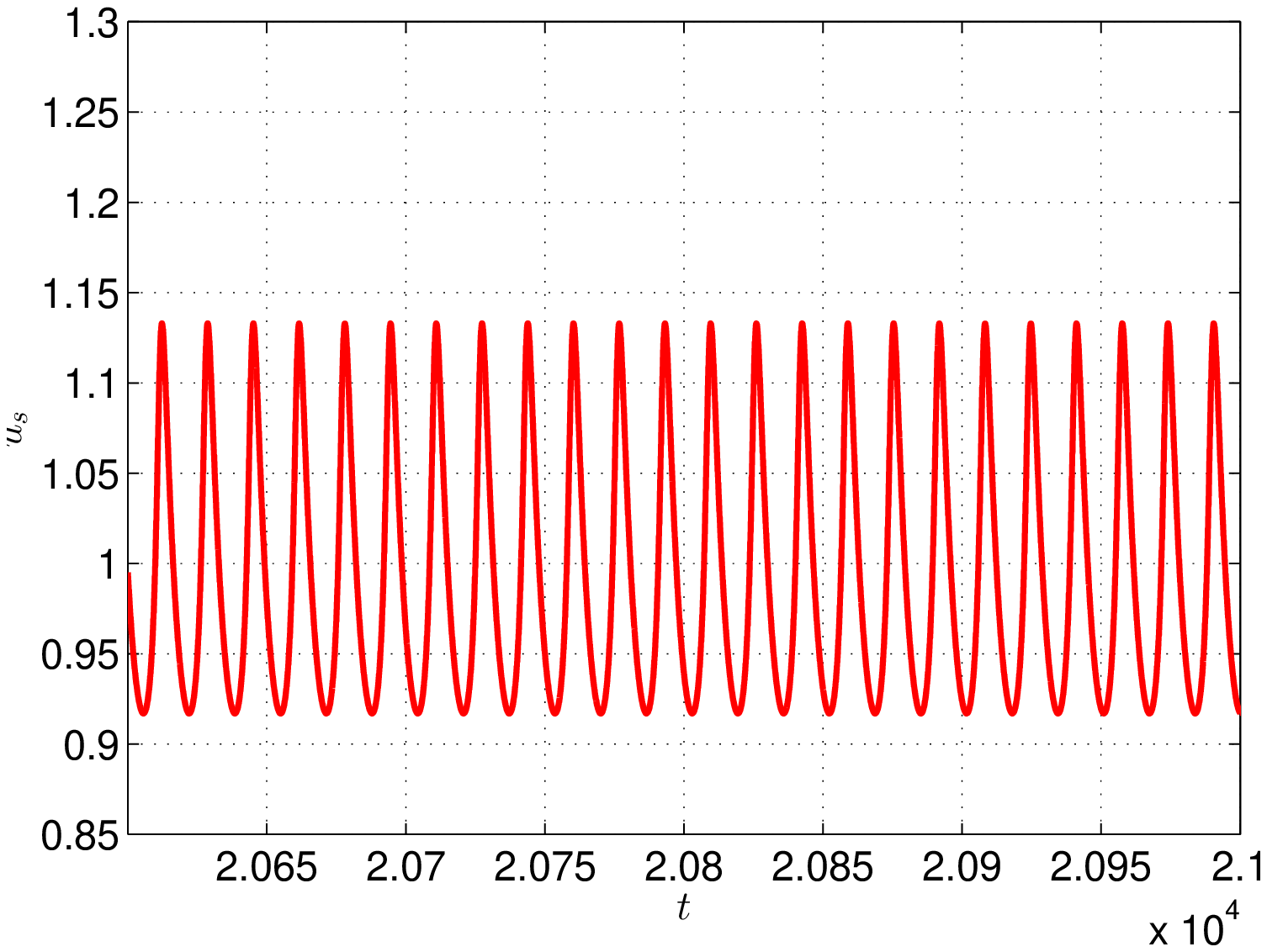}\tabularnewline
\includegraphics[height=4cm]{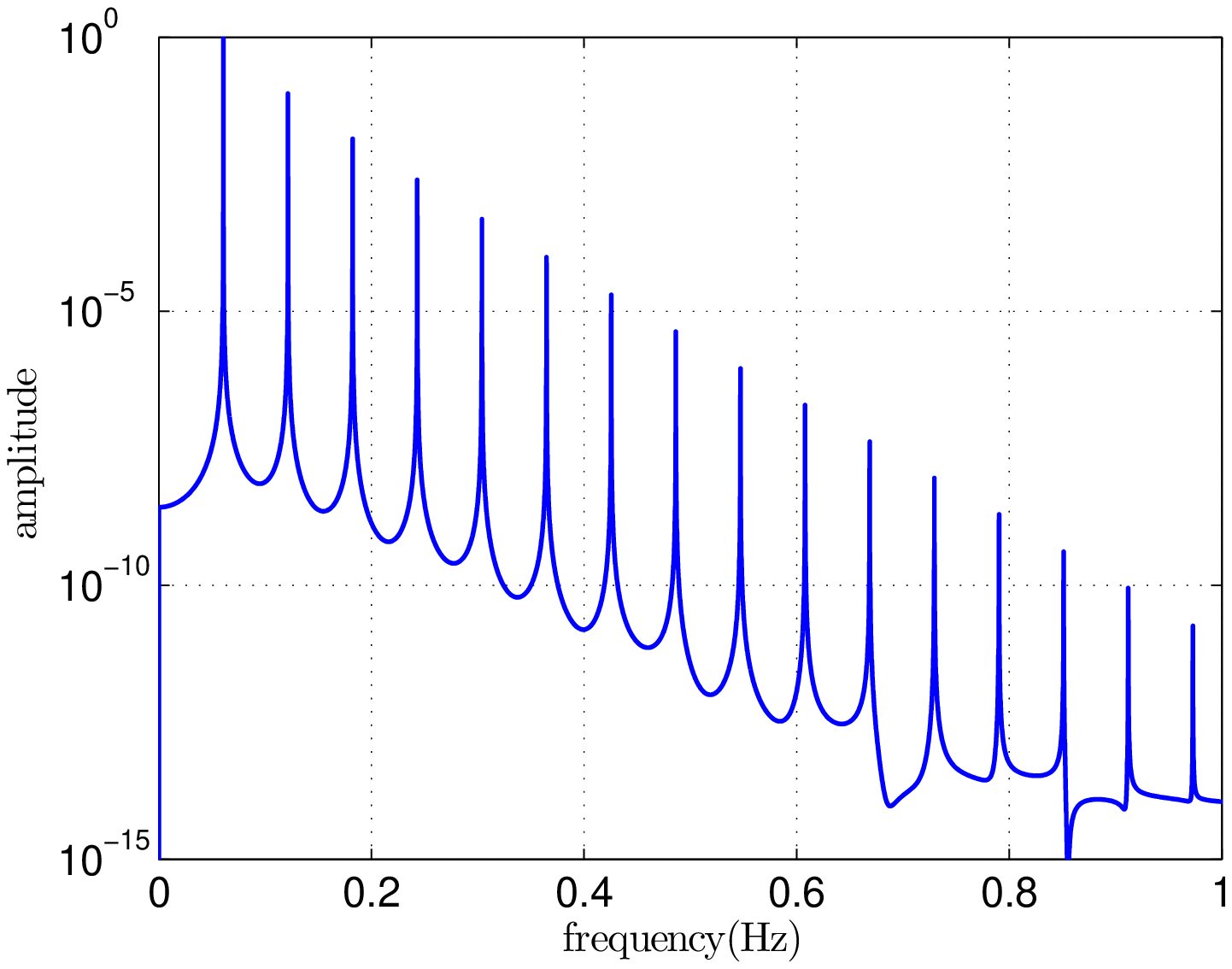}\tabularnewline
\end{tabular}}\subfloat[$\alpha=4.85$]{%
\begin{tabular}{c}
\includegraphics[height=4cm]{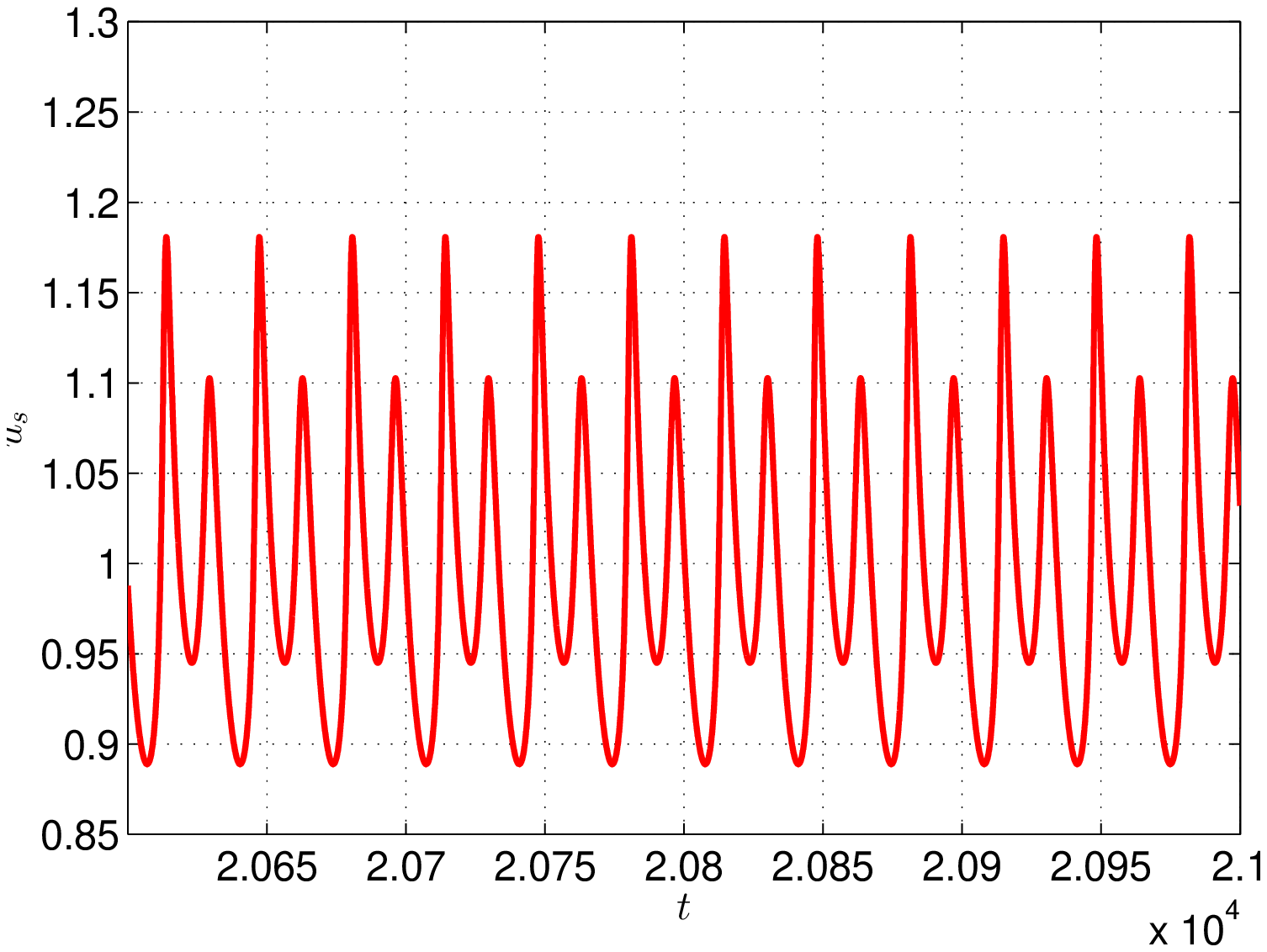}\tabularnewline
\includegraphics[height=4cm]{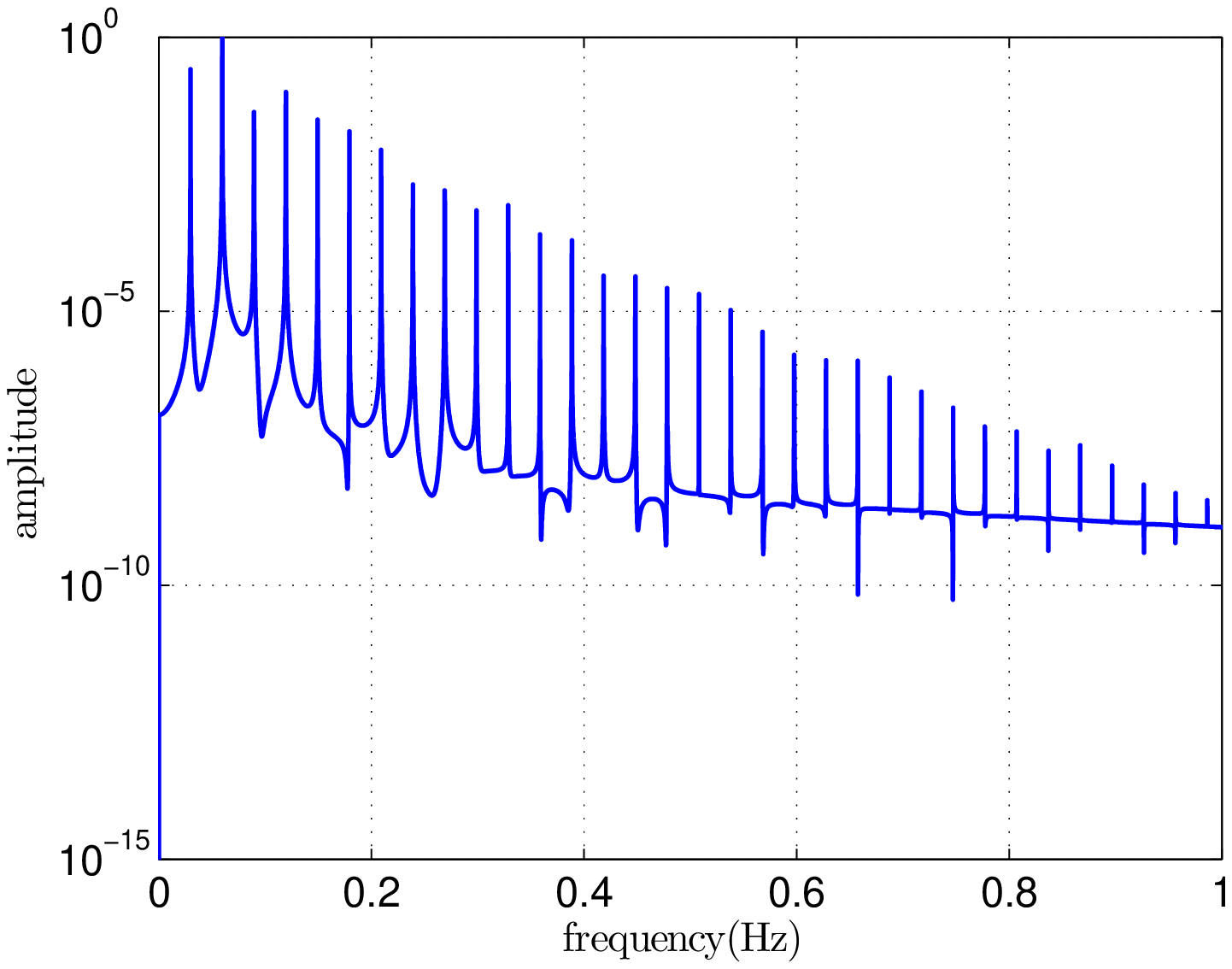}\tabularnewline
\end{tabular}}\subfloat[$\alpha=5.1$]{%
\begin{tabular}{c}
\includegraphics[height=4cm]{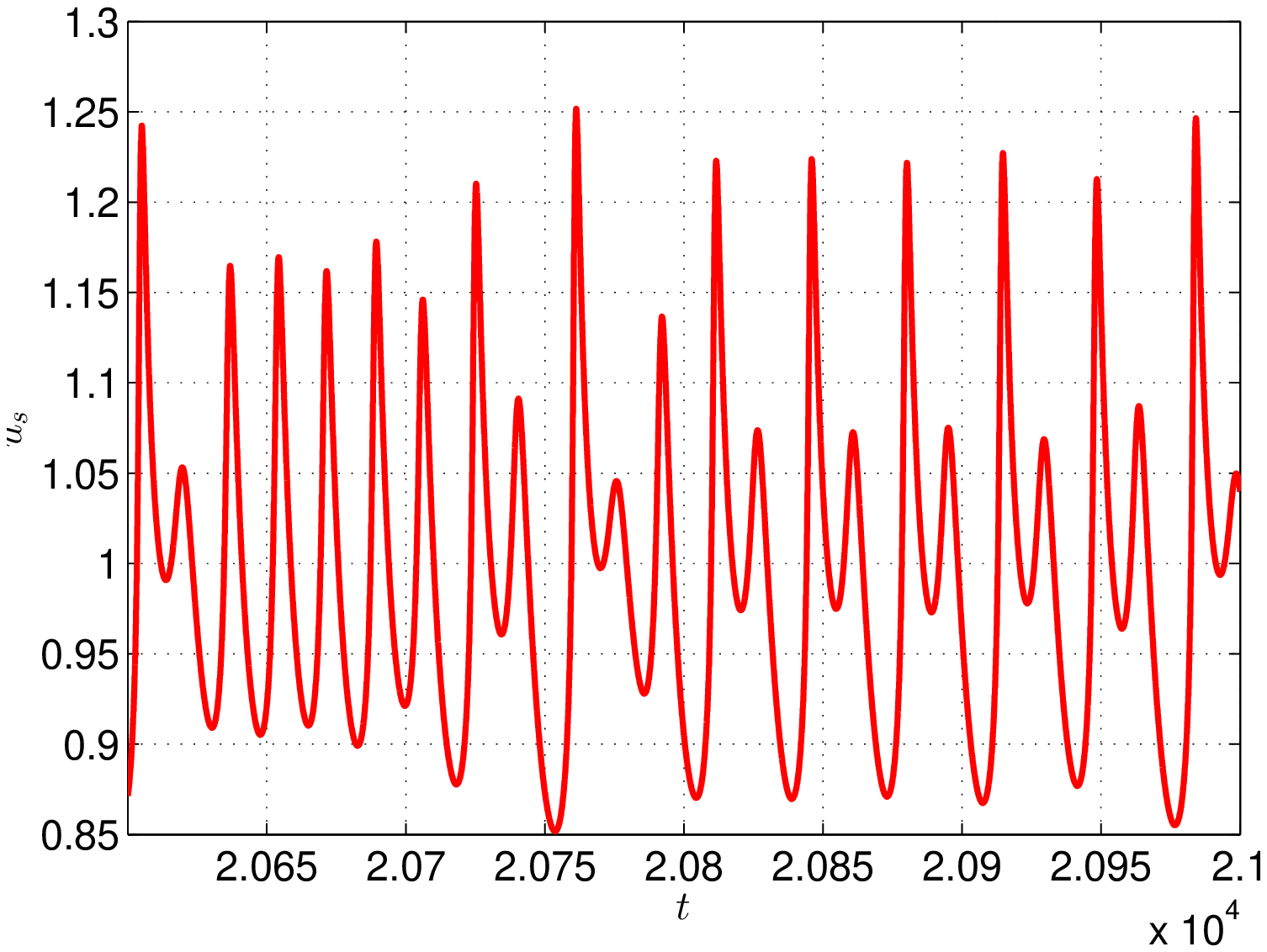}\tabularnewline
\includegraphics[height=4cm]{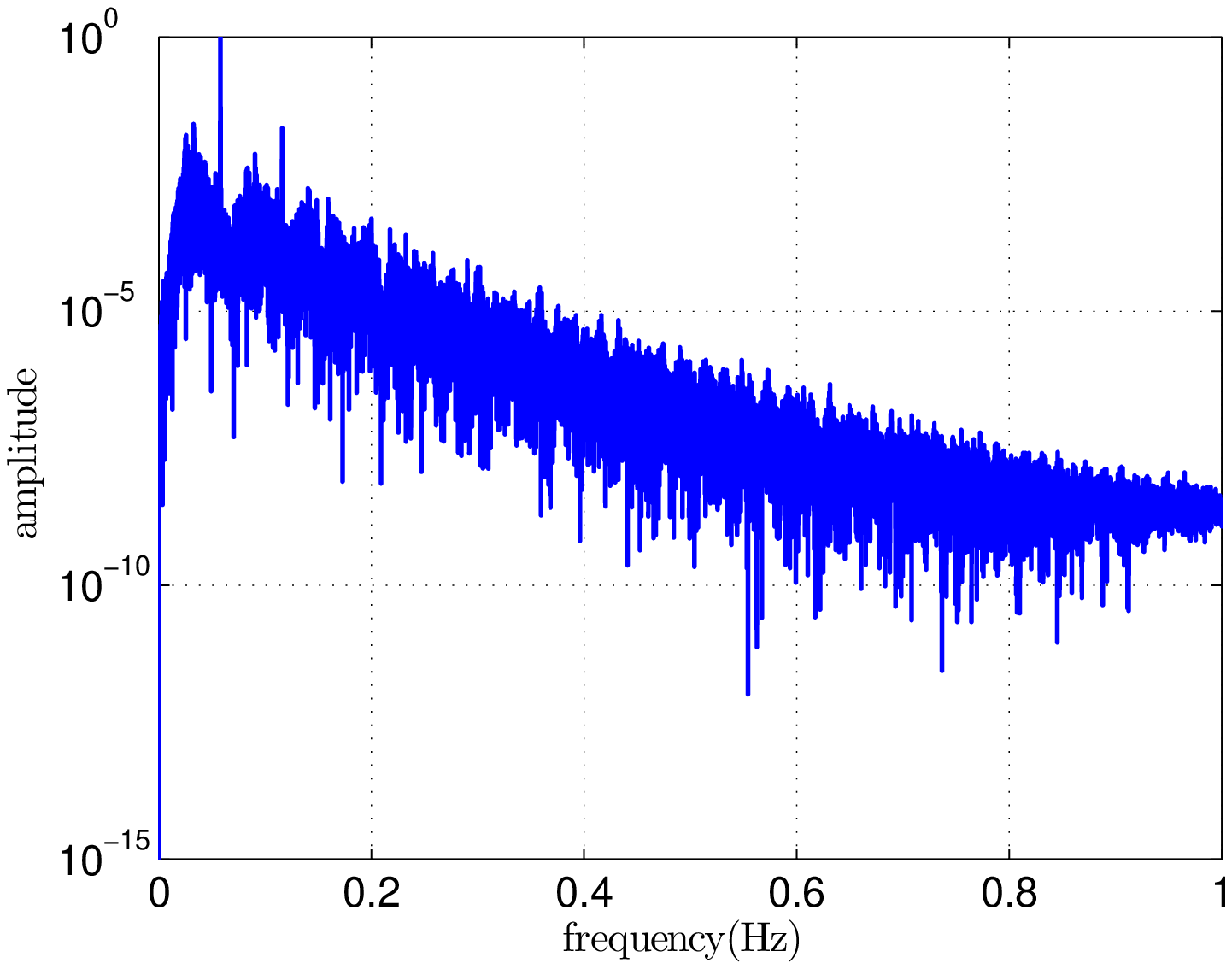}\tabularnewline
\end{tabular}}

\caption{\label{fig: us_periods}Top row -- $u_{s}\left(t\right)$ for $\beta$=0.1
and different values of $\alpha$. Bottom row -- corresponding power
spectra. }
\end{figure}

Although we focus on $u_{s}(t)$, the behavior presented in Fig. \ref{fig: us_periods}
is not unique to the shock value. That said, we must pick an ``interesting''
point, meaning a point close enough to the shock, if we want to capture
the rich dynamics. After the Hopf bifurcation occurs, $u\left(x,t\right)$
is periodic in time and as the bifurcation parameter ($\alpha$ in
this case) is increased further, $u\left(x,t\right)$ appears to become
chaotic. This is illustrated in Fig. \ref{fig:Characteristic-fields},
where the color represents $u\left(x,t\right)$ and the white lines
are the characteristics. 

\begin{figure}[h]
\begin{centering}
\subfloat[$\alpha=4.7$]{\includegraphics[clip,height=4cm]{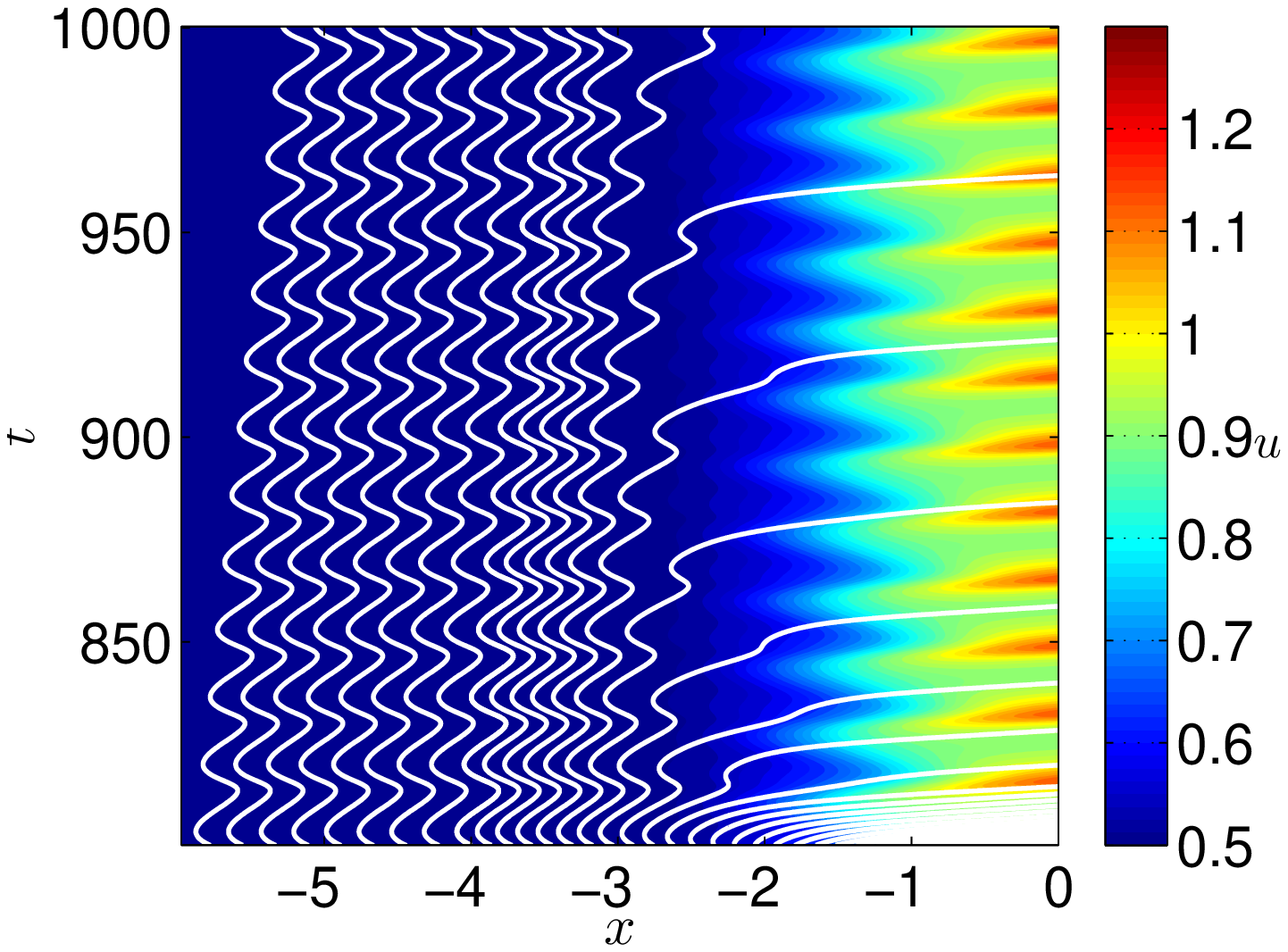}}\subfloat[$\alpha=4.85$]{\includegraphics[clip,height=4cm]{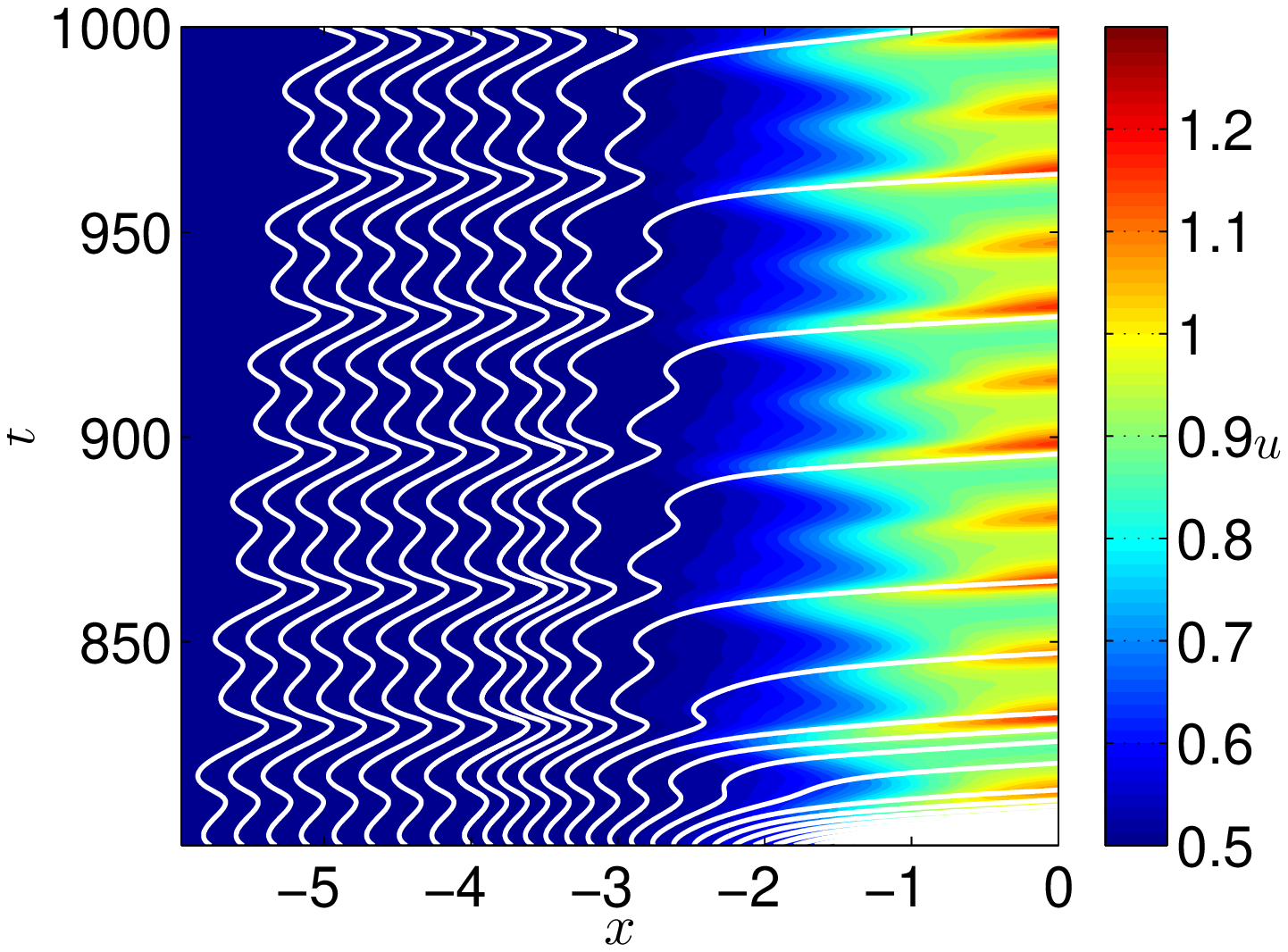}}\subfloat[$\alpha=5.1$]{\includegraphics[clip,height=4cm]{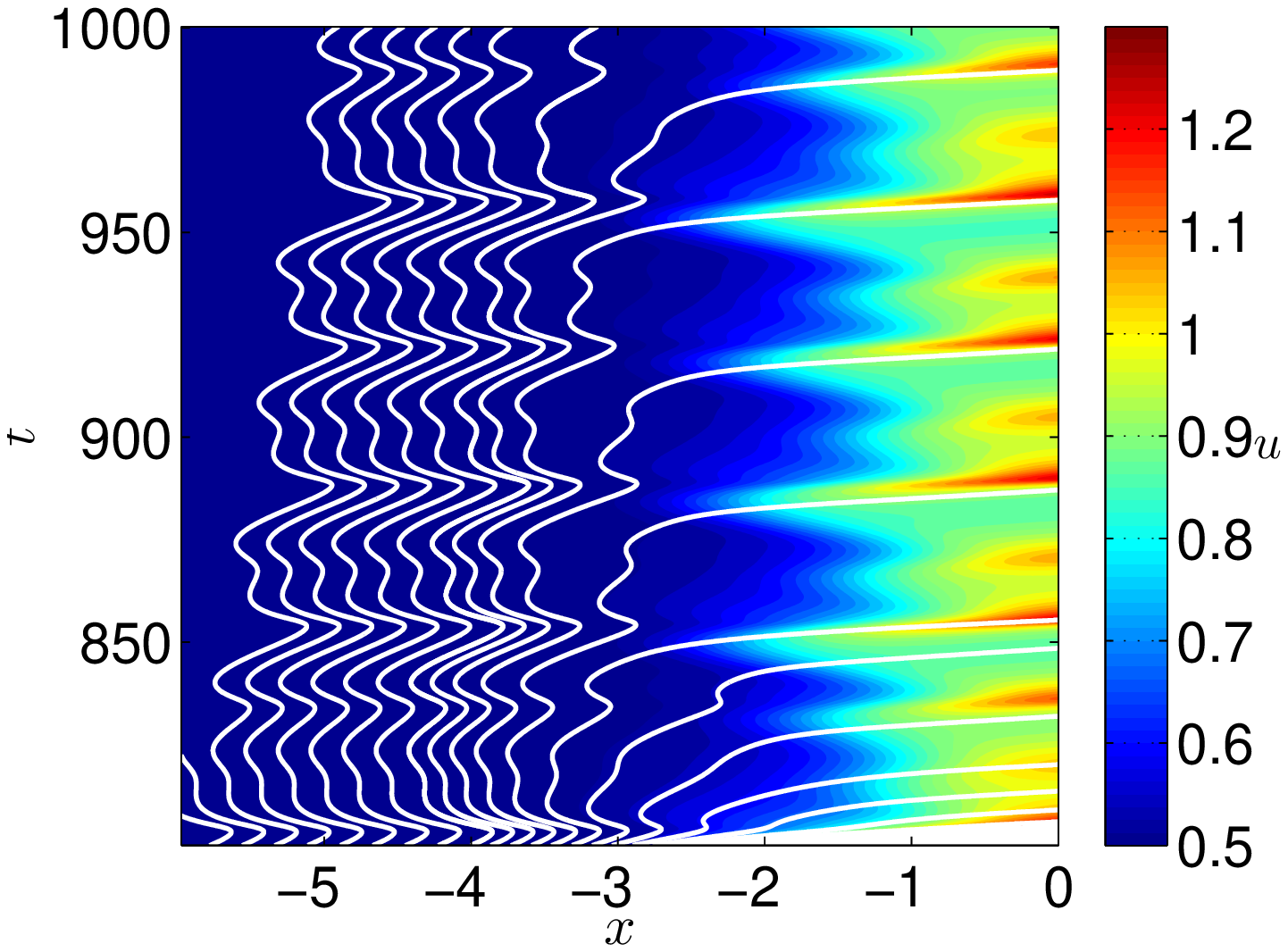}}
\par\end{centering}

\caption{\label{fig:Characteristic-fields}Characteristic fields (white curves)
at various $\alpha$, at periods 1, 2, and chaotic. The color shows
the magnitude of $u$.}
\end{figure}

The bifurcation process is best illustrated by means of a bifurcation
diagram, where the local maxima of the shock value, $u_{s}(t)$, are
plotted at different values of the bifurcation parameter $\alpha$
(Fig. \ref{fig:The-bifurcation-diagram}). 
\begin{figure}[h]
\begin{centering}
\includegraphics[width=5in]{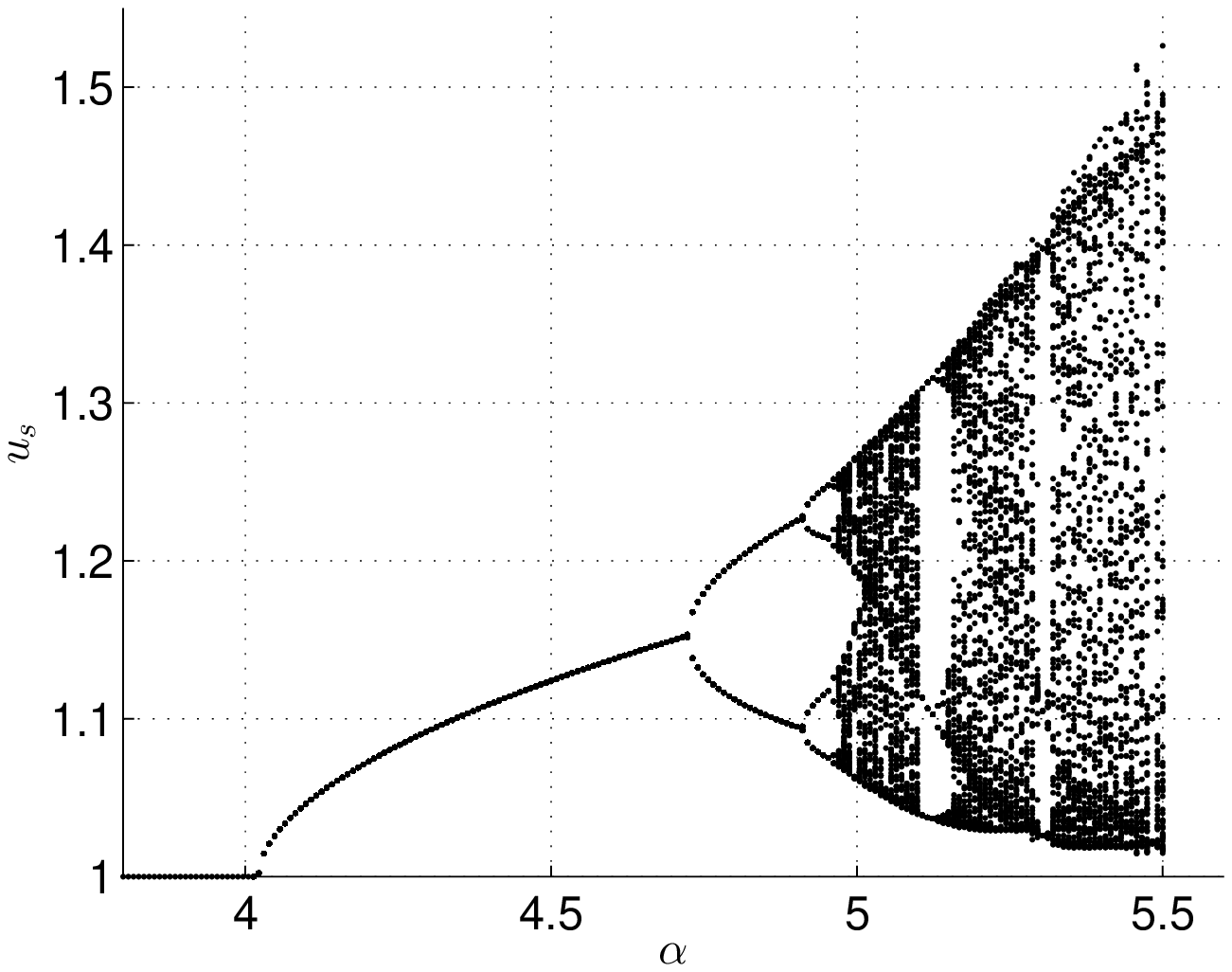}
\par\end{centering}

\caption{\label{fig:The-bifurcation-diagram}The bifurcation diagram at $\beta=0.1$.}
\end{figure}
The bifurcation points, presented in Table \ref{tab:Bifurcation-points},
are used to compute the Feigenbaum number, which appears to approach
the well-known constant $\delta\approx4.669$. 
\begin{table}[h]
\centering{}%
\begin{tabular}{|c|c|c|c|c|c|}
\hline 
$n$ & 1 & 2 & 3 & 4 & 5\tabularnewline
\hline 
\hline 
$\alpha_{n}$ & 4.02 & 4.7202 & 4.9100 & 4.95565 & 9.96553\tabularnewline
\hline 
$F_{n}$ & $\cdots$ & $\cdots$ & 3.68 & 4.15 & 4.62\tabularnewline
\hline 
\end{tabular}\caption{\label{tab:Bifurcation-points}Bifurcation points.}
\end{table}
 The bifurcation diagram in Fig. \ref{fig:The-bifurcation-diagram}
and the power spectra in Fig. \ref{fig: us_periods} all suggest (although
they do not prove) that the chaos in the system is real. In Section
\ref{sub:Time-series-analysis}, we analyze the apparently chaotic
series of $u_{s}\left(t\right)$ at very large $t$, i.e., on the
attractor. 
\begin{rem*}
An interesting feature of the example presented above is that, as
in the reactive Euler equations (e.g., \cite{kasimov2004dynamics}),
inner shocks can form inside the smooth region, $x<0$. These shocks
subsequently overtake the leading shock, rendering its dynamics non-smooth.
The inner-shock formation is simply due to the wave breaking and it
depends on the initial data as well as the parameters in $f$. For
example, as the parameter $\alpha$, which controls the shock-state
sensitivity, is increased, the characteristics are seen to converge
toward each other at large $t$, until, at a critical value of $\alpha$,
the characteristics collide into an inner shock. This shock then overtakes
the leading shock at $x=0$ as shown in Figure \ref{fig:inner-shock}.
A point to emphasize is that the characterization of chaos when such
non-smooth dynamics is present is not easy, in particular due to difficulties
of computing the solution with high accuracy. Our analysis of chaos
is therefore limited to moderate values of $\alpha$, when we know
that the internal shock does not form, yet a chaotic signal is observed.
\end{rem*}
\begin{figure}[h]
\begin{centering}
\includegraphics[width=5in]{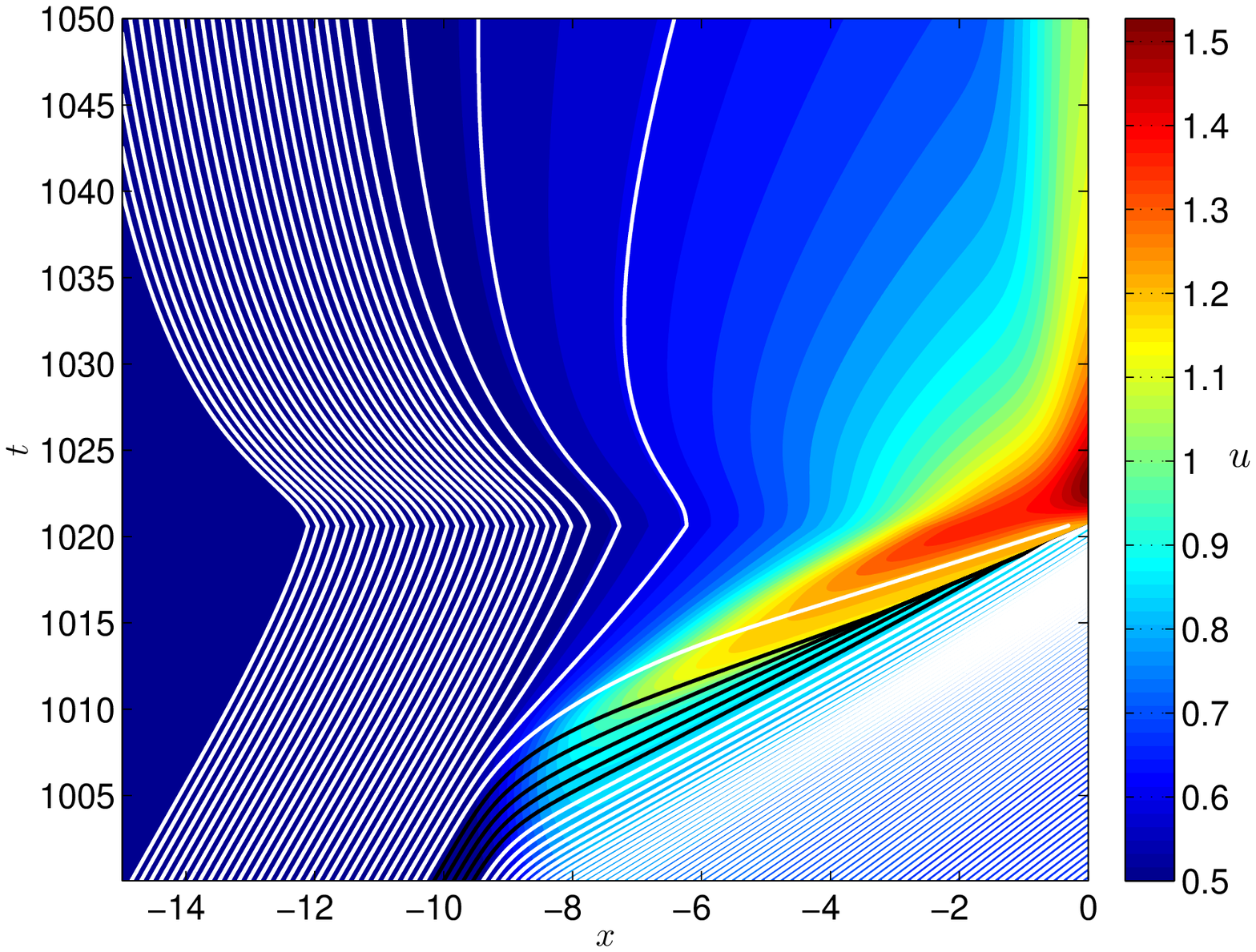}\caption{\label{fig:inner-shock}Formation of an internal shock wave. The color
shows the magnitude of $u$. The white curves are the forward characteristics.}

\par\end{centering}

\end{figure}

\subsection{\label{sub:Time-series-analysis}Time series analysis}

In this section, we use tools of nonlinear dynamical systems to understand
the shock signal. The shock signal represents a one-dimensional measurement
of the infinite dimensional phase space where the solutions live.
Relying on Takens' theorem \cite{takens1981detecting}, we embed the
signal in higher dimensions by choosing a delay, $\tau$, and an embedding
dimension, $m$ (note that choosing an appropriate $\tau$ is a delicate
question). We then use this embedded $m$-dimensional signal to compute
quantities of interest, such as the correlation dimension and the
largest Lyapunov exponent. The numerical calculations are performed
using the open source software OPENTSTOOL \cite{tstool}.

\subsubsection{Delay reconstruction of the attractor\label{sub:Delay-reconstruction-of-time-series}}

We embed the signal $u_{s}^{n}=u_{s}(t_{n})$ in $m$-dimensions by
creating the points 
\begin{eqnarray*}
p_{1} & = & (u_{s}^{1},u_{s}^{1+\tau},\cdots,u_{s}^{1+(m-1)\tau}),\\
p_{2} & = & (u_{s}^{2},u_{s}^{2+\tau},\cdots,u_{s}^{2+(m-1)\tau}),\\
\vdots & \vdots & \vdots\\
p_{N} & = & (u_{s}^{N},u_{s}^{N+\tau},\cdots,u_{s}^{N+(m-1)\tau}),
\end{eqnarray*}
where $N$ is limited by the number of available values of $u_{s}$.
The $m$-dimensional points $(p_{1},\cdots,p_{N})$ then live in an
attractor of dimension at most $m$. It was shown by Takens that provided
$m>2d+1$, where $d$ is the dimension of the attractor where $u_{s}$
lives, there exists a diffeomorphism between the reconstructed attractor
and the ``actual'' attractor (in the limit of the infinite amount
of noise-free data). This immediately allows us to use the reconstructed
attractor to compute quantities such as the correlation dimension
and the Lyapunov spectrum.

Notice that although in theory any choice of $\tau$ will allow such
reconstruction, in practice the situation is quite delicate. The finite
amount of noise-polluted data makes the choice of $\tau$ a non-trivial
issue, still subject of much current research. Since no fail-proof
method appears to exist, we choose $\tau$ as the first minimum of
the mutual information function of $u_{s}$. The reasons for such
a choice can be found in \cite{fraser1986independent}. In the next
subsection, we explore how the reconstructed attractor, its dimension,
and the largest Lyapunov exponent change as we vary the sensitivity
parameter, $\alpha$. We choose $\alpha=4.7,4.85,4.96,4.97,5,5.1$
and see how these quantities change as the dynamic goes from periodic
to chaotic.

\subsubsection{Largest Lyapunov exponent (LLE)}

A chaotic system is characterized by at least one positive Lyapunov
exponent. This means that information must be lost in the system as
time progresses. Predictability is thus highly limited. Because the
largest Lyapunov exponent determines the dominant rate at which information
is lost, we are primarily interested in the LLE . Several methods
are available to compute the LLE, and we choose to use the one presented
in \cite{rosenstein1993practical}. The algorithm used here is discussed
in the Appendix.

The sequence of period doubling observed in Fig. \ref{fig:The-bifurcation-diagram}
and Table \ref{tab:Bifurcation-points} suggests that the sequence
first saturates at $\alpha_{c}\approx4.97$. After this critical value,
the solution seems to become aperiodic, as indicated by its power
spectrum. We compute the LLE for values of $\alpha$ slightly below
and slightly above $\alpha_{c}$ in order to illustrate the drastic
change in the magnitude of LLE. The values of LLE are presented in
Table \ref{tab:LLE-Cdim}, where the error estimates are merely educated
guesses of a confidence interval obtained from running the algorithm
for different embedding dimensions (from dimension 3 to 10). It is
particularly difficult to obtain quantitative error estimates because
the sources of error are unknown and the algorithm requires some subjective
choice of a ``range'' (see the Appendix)

\begin{table}[h]
\begin{tabular}{|c|c|c|c|c|c|}
\hline 
 & $4.85$ & $4.96$ & $4.97$ & $5$ & $5.1$\tabularnewline
\hline 
\hline 
LLE & $0$ & $0$ & $0.0042\pm2\cdot10^{-4}$ & $0.01816\pm3\cdot10^{-5}$ & $0.0315\pm8\cdot10^{-4}$\tabularnewline
\hline 
$D_{C}$ & $1.0006\pm3\cdot10^{-4}$ & $1.002\pm2\cdot10^{-2}$ & $1.67\pm7\cdot10^{-2}$ & $1.87\pm3\cdot10^{-2}$ & $1.91\pm2\cdot10^{-2}$\tabularnewline
\hline 
\end{tabular} \caption{\label{tab:LLE-Cdim}The largest Lyapunov exponent and correlation
dimension for different values of $\alpha$, the bifurcation parameter.}
\end{table}

A study of the dependence of the LLE on the embedding dimension is
presented in the Appendix. Although precise error estimates are not
available, there is still some value in the predictions made; namely,
a clear difference is observed between $\alpha=4.96$ and $\alpha=4.97$,
which corresponds to the apparent saturation point of the bifurcation
diagram presented in Fig. \ref{fig:The-bifurcation-diagram}.

\subsubsection{Correlation dimension estimate }

While the Lyapunov exponent measures the rate at which information
is lost in a dynamical system, the correlation dimension gives an
upper bound on the number of degrees of freedom a system has. This
is an important concept to distinguish deterministic chaos from stochastic
chaos.\textbf{ }For simple attractors, the correlation dimension is
an integer, but for strange or chaotic attractors the dimension is
fractal. We compute the correlation dimension of our time series using
the algorithm presented in\textbf{ \cite{grassberger1983measuring}}.
The results for different values of $\alpha$ are shown in Table \ref{tab:LLE-Cdim}.

\section{Conclusions}

A simple model equation consisting of an inviscid Burgers' equation
forced with a term that depends on the current shock speed is analyzed
by calculating its steady-state solutions, the linear stability properties
of these solutions, and the non-linear, time-dependent evolution that
starts with the steady state as an initial condition. It is found
that the theory and numerical results for the model equation parallel
those of the reactive Euler equations of one-dimensional gas dynamics,
which have extensively been used to describe detonation waves. 

The steady-state theory of the model is analogous to that of the ZND
theory of detonation, describing both self-sustained and overdriven
solutions. The normal-mode linear stability theory of the model is
qualitatively similar to the detonation stability theory, reproducing
comparably complex spectral behavior. The nonlinear dynamics, computed
with a high-accuracy numerical solver, exhibit the Hopf bifurcation
from a stable solution to a limit cycle, together with a subsequent
cascade of period doubling bifurcations, resulting eventually in,
what is very likely, chaos. All of these features have their counterparts
in the solutions of the reactive Euler equations. The qualitative
agreement between the two systems, so drastically different in their
complexity, hints at a possibility that a theory for the observed
complex dynamics of detonations may be rather simple.\\

\section*{Aknowledgements}

The work by R.R.R. was partially supported by NSF grants DMS-1115278,
DMS-1007967, and DMS-0907955.

\bibliographystyle{plain}
\bibliography{/Users/aslankasimov/Dropbox/Biblioteka/akasimov-refs}

\appendix

\section*{\label{sec:Numerical-Method}Appendix. Numerical Algorithms}

\subsection*{PDE solver}

The hyperbolic system presented in this paper is solved using a method
of lines approach, in which we discretize in space and then evolve
the resultant ODE system in time. For the spatial discretization,
we use a five-point Weighted Essentially Non-Oscillatory (WENO) method
\cite{Shu-1998}. Our stencils are biased to the right by one point.
As usually done in WENO methods, we introduce a small parameter, $\epsilon$,
to guarantee that the denominators in the smoothness indicators of
the method do not become zero when calculating the weight coefficients.
For the problems investigated here, we experimented with $\epsilon$
between $10^{-5}$ and $10^{-10}$, and the solutions appear to be
unaffected by this choice. The chosen $\epsilon$ for all computations
was $\epsilon=10^{-6}$.

To avoid spurious oscillations, we use third-order Total Variation
Diminishing (TVD) Runge-Kutta time stepping algorithm \cite{gottlieb1998total}.
Convergence tests were performed using the steady-state solution in
the stable regime, for which fifth-order convergence in space was
obtained.

\subsection*{Largest Lyapunov exponent (LLE) }

The algorithm for the LLE consists of the following steps:
\begin{enumerate}
\item Given a time series $u_{s}^{n}$, embed it in an $m$-dimensional
space with delay $\tau$, as outlined in Section \ref{sub:Delay-reconstruction-of-time-series}.
\item For a given point $p_{i}$, find the closest point $p_{j_{i}}$ such
that $|i-j_{i}|>the\,\, mean\,\, period$, where the mean period is
estimated by the inverse of the dominant frequency of the power spectrum.\textbf{ }
\item Define $d_{i}^{m}(n)=\|p_{i+n}-p_{j+n}\|$. Then, $d_{i}(n)$ represents
the divergence between trajectories starting at $p_{i}$ and $p_{j_{i}}$. 
\item Choose $N$ points randomly on the attractor and compute an average
divergence of trajectories by $d^{m}(n)=\frac{1}{N}\sum_{l=1}^{N}d_{l}^{m}(n)$.
The number $N$ is limited either by the amount of available data
or by computational restrictions.
\item Plot $\log(d^{m}(n))$ versus $n\Delta t$. 
\item Repeat steps 1-5 for different values of embedding dimension, $m$,
and find a region $t_{min}<t<t_{max}$ such that the plot of $\log(d^{m}(n))$
vs. $n\Delta t$ is nearly a straight line for the values of $m$
used. 
\item Do a least squares fit in the region $t_{\min}<t<t_{\max}$ to extract
$\lambda_{1}^{m}$ for each embedding dimension $m$. 
\item If the values of $\lambda_{1}^{m}$ do not vary much for a wide range
of embedding dimensions, $m$, let $\lambda_{1}$ be the average over
all embedding dimensions computed. 
\end{enumerate}
The algorithm suggested above, which is presented in \cite{rosenstein1993practical},
has some parameters that are not objectively chosen. The value of
$\lambda_{1}$ depends, among other things, on the choices of $\tau$,
the range of $m$ considered, the choices of $t_{\min}$ and $t_{\max}$,
and on $N$. Of course, it also depends on the quality of the data
set and the amount of noise present in it. In \cite{rosenstein1993practical},
a numerical study of this parameter dependence is performed, and it
is claimed that the algorithm is rather robust. In our study, we use
the range $3\leq m\leq20$, fix $\tau=150$, choose $N=20,000$, and
choose $t_{\min}$ and $t_{\max}$ by looking at the plot of $\log(d)\text{ vs }t$.
A typical plot is show in Fig. \ref{fig:Divergence-of-Trajectories-and-Correlation-dimension},
where $\alpha=5$, $t_{\min}=100$, and $t_{\max}=200$. 

\begin{figure}[h]
\subfloat[The divergence of trajectories for different embedding dimensions.]{\includegraphics[height=2.5in]{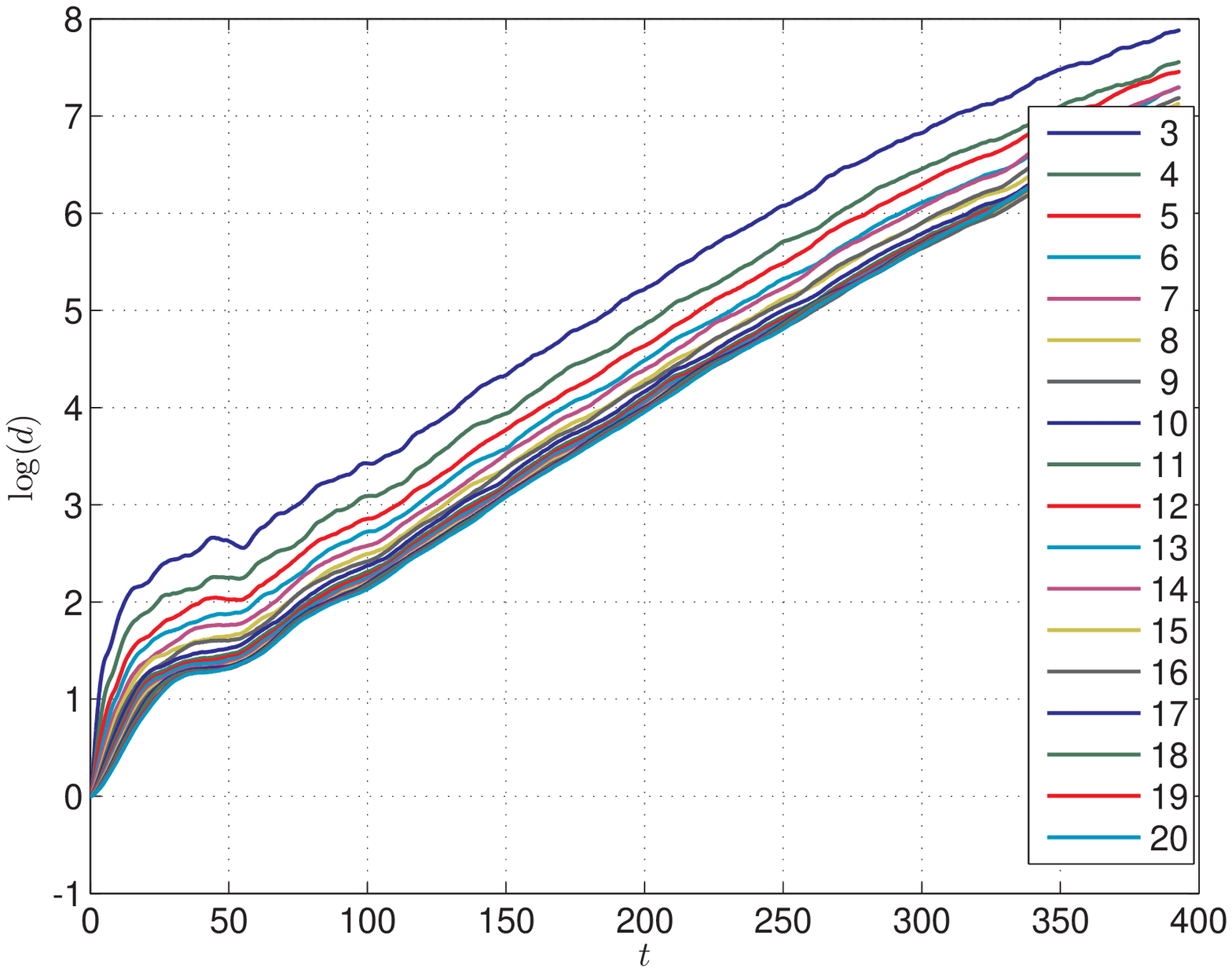}

}\subfloat[Plots of $\log(C_{r})$ vs $\log(r)$ for different embedding dimensions.]{\includegraphics[height=2.5in]{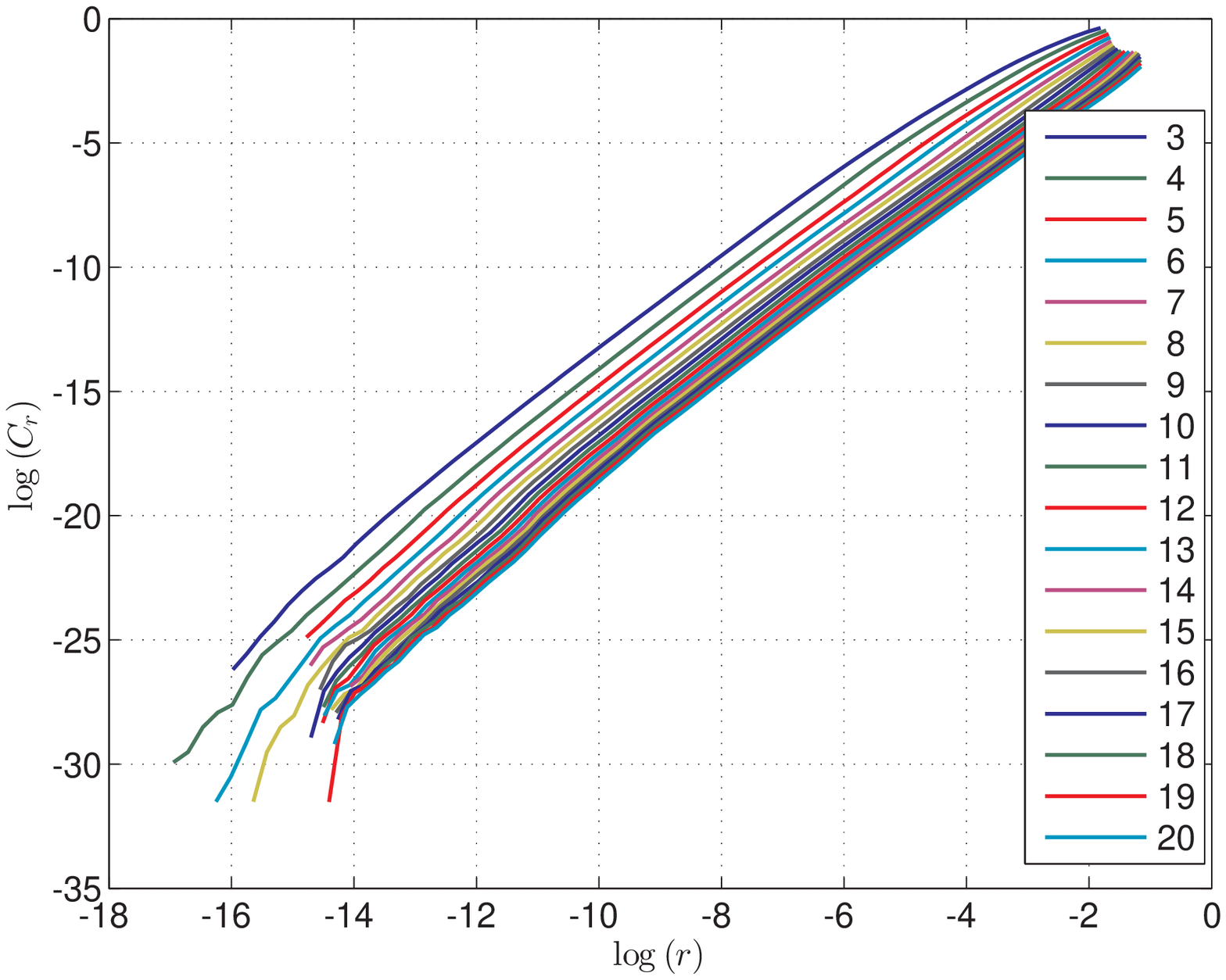}}\\
\subfloat[The largest Lyapunov exponent estimated from (A).]{\includegraphics[height=2.5in]{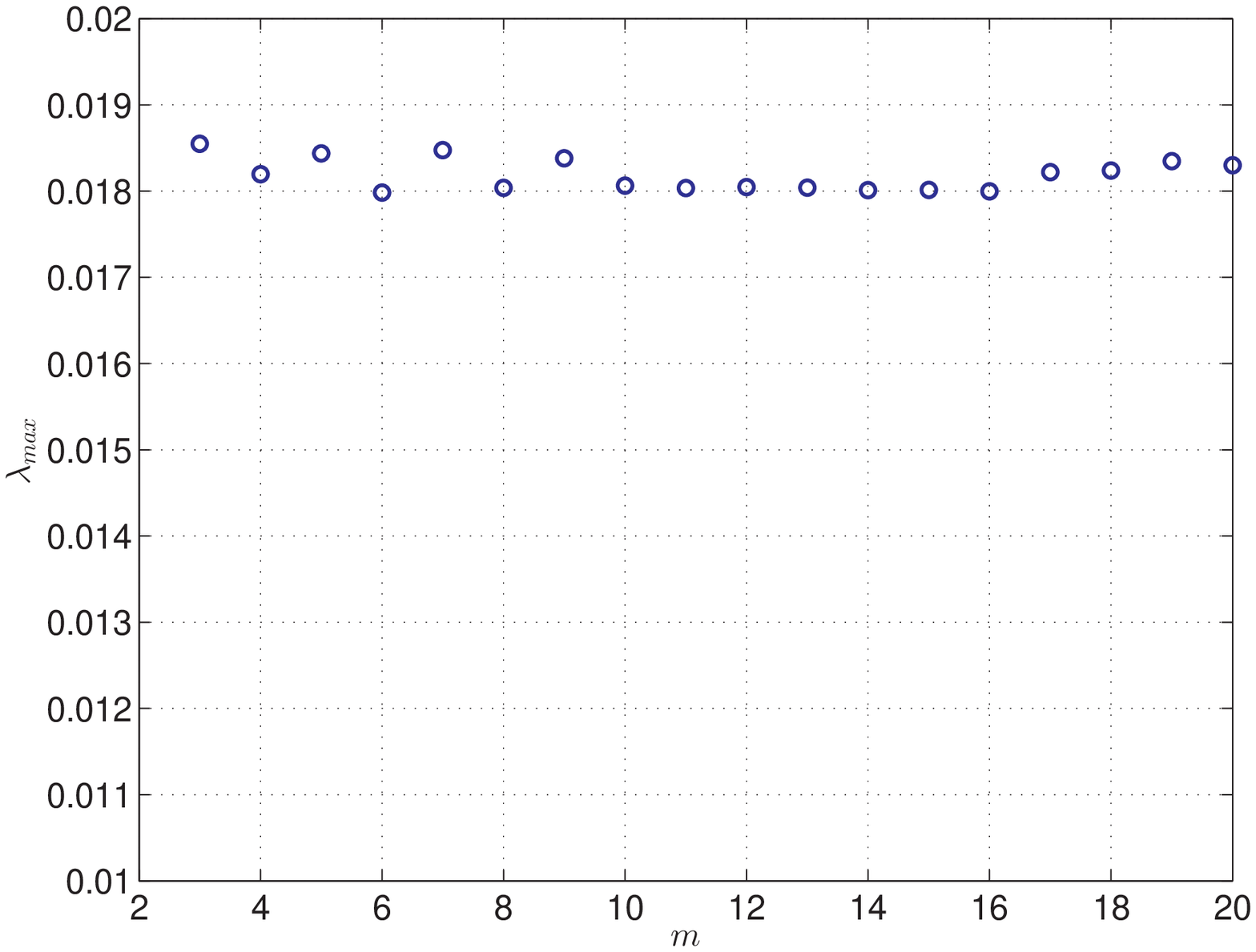}

}\subfloat[The correlation dimension estimated from (B).]{\includegraphics[height=2.5in]{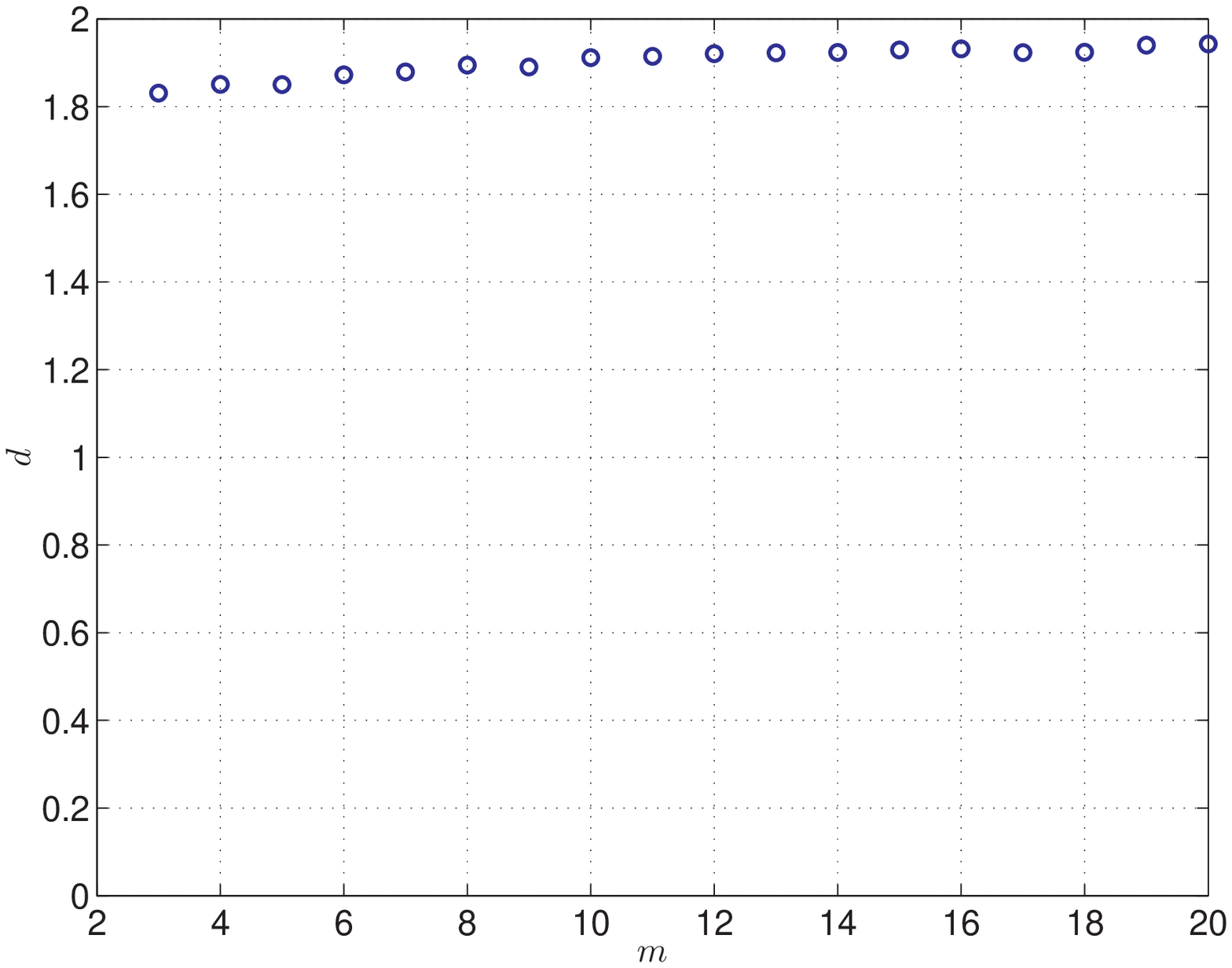}

}

\caption{\label{fig:Divergence-of-Trajectories-and-Correlation-dimension}Dependence
of the largest Lyapunov exponent and correlation dimension on the
choice of the embedding dimension.}

\end{figure}
In Table \ref{tab:LLE-different-dimensions}, we present the values
of LLE calculated for each given dimension from Fig. \ref{fig:Divergence-of-Trajectories-and-Correlation-dimension}. 

\begin{table}[h]

\begin{tabular}{|c|c|c|c|c|c|c|c|}
\hline 
m & 3 & 4 & 5 & 8 & 10 & 15 & 20\tabularnewline
\hline 
\hline 
LLE & 0.0185  & 0.0182 & 0.0184 & 0.0180 & 0.0181 & 0.0180 & 0.0183\tabularnewline
\hline 
\end{tabular}\caption{\label{tab:LLE-different-dimensions}The LLE for different embedding
dimensions.}

\end{table}

For other values of $\alpha$, the same methodology is applied. The
values shown in Table \ref{tab:LLE-Cdim} are obtained by averaging
the LLE over multiple dimensions. The error estimates are the maximum
differences between the averages and the entries.

\subsection*{Correlation dimension}

The algorithm for computing the correlation dimension follows that
of \cite{grassberger1983measuring}. It consists of the following
steps:
\begin{enumerate}
\item Given a time series $u_{s}^{n}$, embed it in an $m$ dimensional
space with delay $\tau$, as outlined in Section \ref{sub:Delay-reconstruction-of-time-series}.
\item Construct a grid $\bar{r}=(r_{1},\cdots,r_{L})$ where $r_{1}>\min_{i,j}(\|u_{s}^{i}-u_{s}^{j}\|)$
and $r_{L}<\max_{i,j}(\|u_{s}^{i}-u_{s}^{j}\|)$. 
\item For each $r_{i}$ define the correlation sum, at a given dimension
$m$, to be $C^{m}(r_{k})=\frac{1}{N^{2}}\sum_{i,j=1}^{N}\theta(r_{k}-\|u_{s}^{i}-u_{s}^{j}\|)$.
\item Plot $\log(C^{m}(r_{k}))$ versus $\log(r_{k})$.
\item Repeat steps 1-4 for different values of the embedding dimension $m$,
and find a region $r_{min}<r<r_{max}$ such that the plot of $\log(C^{m}(r))$
vs. $\log(r)$ is nearly a straight line for the values of $m$ used. 
\item Do a least squares fit over the region $r_{\min}<r<r_{\max}$ to extract
$D_{C}^{m}$ for each embedding dimension $m$. 
\item If the values of $D_{C}^{m}$ do not vary much for a wide range of
embedding dimensions $m$, let $D_{C}$ be the average over all embedding
dimensions.
\end{enumerate}
Similar to the LLE calculation, the computed value of $D_{C}$ depends
on many parameters that cannot be objectively chosen. The choices
of $\tau,\ m,\ r_{\min},\ r_{\max}$ in particular have an appreciable
effect on the value of $D_{C}$. In our study, we use the range $3\leq m\leq20$,
fix $\tau=150$, choose $N=5000$, and choose $r_{\min}$ and $r_{\max}$
by looking at the plot of $\log(C^{m}(r))\text{ vs }\log(r)$. A typical
plot is shown in Fig. \ref{fig:Divergence-of-Trajectories-and-Correlation-dimension}(a),
where $\alpha=5$, $\log(r_{\min})=-8$, and $\log\left(r_{\max}\right)=-4$.

In Table \ref{tab:Correlation-dimension-for-different-dimensions},
we show the computed values of the correlation dimension for the data
presented in Fig. \ref{fig:Divergence-of-Trajectories-and-Correlation-dimension}(b).
Notice that the variability here is much higher than in the computation
for the largest Lyapunov exponent. 

\begin{table}[h]
\begin{tabular}{|c|c|c|c|c|c|c|c|}
\hline 
$m$ & 3 & 4 & 5 & 8 & 10 & 15 & 20\tabularnewline
\hline 
\hline 
$D_{c}$ & 1.8306 & 1.8507 & 1.8500 & 1.8904 & 1.9147  & 1.9317 & 1.9432\tabularnewline
\hline 
\end{tabular}\caption{The correlation dimension for different embedding dimensions. \label{tab:Correlation-dimension-for-different-dimensions}}
\end{table}

\end{document}